\begin{document}
\onehalfspacing

\title{\large Eleven-dimensional supergravity \\ as a Calabi--Yau twofold}

\author{Fabian Hahner$^\flat$ \and Ingmar Saberi$^\natural$}

\email{fhahner@mathi.uni-heidelberg.de}

\email{i.saberi@physik.uni-muenchen.de}

\address{{$^\flat$}Institut f\"ur Mathematik, Universit\"at Heidelberg \\ Im Neuenheimer Feld 205 \\ 69120 Heidelberg, Deutschland}

\address{{$^\natural$}Ludwig-Maximilians-Universit\"at M\"unchen \\ Theresienstra\ss{}e 37 \\ 80333 M\"unchen, Deutschland}

\begin{abstract}
	We construct a generalization of Poisson--Chern--Simons theory, defined on any supermanifold equipped with an appropriate filtration of the tangent bundle. Our construction recovers interacting eleven-dimensional supergravity in Cederwall's formulation, as well as all possible twists of the theory, and does so in a uniform and geometric fashion. Among other things, this proves that Costello's description of the maximal twist
	\emph{is} the twist of eleven-dimensional supergravity in its pure spinor description. 
	It also provides a pure spinor lift of the interactions in the minimally twisted theory. Our techniques enhance the BV formulation of the interactions of each theory to a homotopy Poisson structure by defining a compatible graded-commutative product; this suggests interpretations in terms of deformations of geometric structures on superspace, and provides some concrete evidence for a first-quantized origin of the theories.
\end{abstract}

\maketitle
\thispagestyle{empty}

\tableofcontents
\setcounter{tocdepth}{1}
\newpage

\setlength{\parskip}{6pt}

\section{Introduction}

Since the first supersymmetric field theories were constructed, it has been a goal to understand their properties and simplify their construction using superspace techniques. This motivation has perhaps been largest in the case of supergravity theories. The geometric nature of the theory of Einstein gravity, which is constructed using a covariant least-action principle on the space of metrics of Lorentzian signature, has motivated much research that tries to give an equally pithy formulation of supergravity theories as governing moduli problems of (deformations of) particular natural geometric structures on superspace. (The set of appropriate references is dense in the literature, so we will refrain.)

Among all supergravity theories of physical interest, perhaps the most exceptional is eleven-dimensional supergravity, which was first constructed by Cremmer, Julia, and Scherk in 1978~\cite{CJS}, and which is expected to be the low energy limit of M-theory~\cite{WittenM}. M-theory has yet to be constructed, although expectations exist that a worldsheet construction as a theory of fundamental membranes might be possible. While the component-field formulation of this theory is relatively streamlined---in addition to the metric, the theory contains only a gravitino and an abelian three-form gauge field with Chern--Simons term---it proved difficult to even formulate the theory in superspace, and a superspace least action principle remained out of reach. Part of the difficulty can be attributed to attempts to find sets of auxiliary fields that could be used to represent supersymmetry off shell, which was seen as a necessary prerequisite.

A major leap forward was taken in work of Cederwall, who applied the pure spinor superfield formalism to construct a superspace description of perturbative eleven-dimensional supergravity using the BV formalism. 
The relation of eleven-dimensional pure spinors to supergravity dates back at least to~\cite{HowePS2}.
The connection had been sharpened in~\cite{CederwallM5}, which observed that a particular eleven-dimensional pure spinor superfield reproduced the BV supergravity multiplet. In~\cite{Ced-towards}, a candidate cubic interaction term for this multiplet was constructed; in~\cite{Ced-11d}, Cederwall went on to extend this by a somewhat subtle quartic term in the BV action functional, and to prove that the result satisfies the BV master equation, thus giving a consistent, manifestly supersymmetric interacting theory that---since the theory is expected to be unique---must be eleven-dimensional supergravity itself. 
(Pure spinor techniques were also used from a first-quantized perspective to give new models of the supermembrane; see Berkovits' work in~\cite{BerkovitsSupermembrane}, generalizing his formulation of the superstring.)
The pure spinor description thus not only formulates the theory on superspace, but also dramatically simplifies the structure of  its interactions: a non-polynomial action for the component fields is replaced by a quartic polynomial. (Such simplifications are typical of interacting pure spinor theories.)
Nonetheless, it does not provide a geometric origin for the quartic polynomial in question. Neither does it give an interpretation of the moduli problem it describes in terms of deformations of the superspace geometry itself.

Later, and in disjoint fashion, further progress was made on twisted versions of eleven-dimensional supergravity. Twists of supergravity theories were defined by Costello and Li in~\cite{CostelloLi}, generalizing the standard notion of a twist of a supersymmetric field theory. Using worldsheet techniques from topological string theory, they gave a proposed description of the holomorphic twist of type IIB supergravity. Costello and Li's theory is a version of BCOV theory~\cite{BCOV}, for which the moduli-theoretic interpretation is clear; it is related to the Kodaira--Spencer theory of deformations of Calabi--Yau structure.  In~\cite{CostelloMtheory2}, Costello went on to investigate eleven-dimensional supergravity in the omega background; his proposed description links the maximal twist of eleven-dimensional supergravity to a non-commutative Chern--Simons theory called \emph{Poisson--Chern--Simons theory}.

Poisson--Chern--Simons theory is simple to describe in the BV formalism. Its fields are given by the Dolbeault complex of $(0,\bu)$-forms on a Calabi--Yau twofold, tensored with the de Rham complex on $\R^7$ (or, more generally, a $G_2$-manifold; for nonperturbative issues related to $G_2$-manifolds, see~\cite{TopoM,DOZG2} and references therein). The interactions are determined by an $L_\infty$ structure on the fields, which is in fact strict: the Lie bracket is the Poisson bracket of holomorphic functions induced by the Calabi--Yau form, whose inverse is a holomorphic Poisson bivector. This theory has two essential features. Firstly, it also has a moduli-theoretic interpretation. The Lie algebra of holomorphic functions with the Poisson bracket is a one-dimensional central extension of holomorphic Hamiltonian vector fields. Since the symplectic structure is the holomorphic volume form, these are also divergence-free vector fields, and can thus be also thought of as related to the moduli space of deformations of Calabi--Yau structures.
Secondly, the central extension equips the fields of Poisson--Chern--Simons theory with a \emph{commutative} structure; the interactions define not just a dg Lie structure, but a dg Poisson algebra structure. Recalling that the observables of a three-dimensional TQFT are equipped with an $E_3$-algebra structure, which is equivalent to an even-shifted Poisson structure, we see that this formulation is at least suggestive of a first-quantized origin. (Note, though, that there are subtleties in defining $E_3$ algebra structures on theories of this type; see~\cite{ChrisBrian}.)

Recent work has pushed our understanding of twisted eleven-dimensional supergravity further; all approaches have either used dualities or target-space techniques, since no worldsheet description is available. Pure spinor techniques were applied in~\cite{spinortwist} to give concise and computationally straightforward descriptions of the twists of supergravity multiplets. This led to the first direct computations of the minimally twisted eleven-dimensional and type IIB supergravity multiplets, the latter confirming Costello and Li's proposal at the free level. Working directly with the component fields, it was also shown that the maximally twisted eleven-dimensional multiplet reduces to Poisson--Chern--Simons theory in the free limit~\cite{MaxTwist}. In~\cite{RSW11d}, a consistent interacting $\Z/2$-graded BV theory was defined on the minimally twisted eleven-dimensional supergravity multiplet. Surprisingly, the cohomology of this theory on flat space is a one-dimensional $L_\infty$ central extension of the exceptional infinite-dimensional simple super Lie algebra $E(5|10)$~\cite{Kac}. Other exceptional simple super Lie algebras also play fundamental roles in holomorphic M-theory~\cite{RWindex,WS-e36}.

In this paper, we take a step towards bringing some of these lines of work together by exploiting a powerful and seemingly underappreciated analogy between the geometric structures in play on each case. Thinking of Poisson--Chern--Simons theory (after localizing six directions with omega backgrounds) as a theory in five dimensions, we note that the theory must be equipped with a \emph{transversely holomorphic foliation} (or THF structure) that lets us think of the geometry as locally isomorphic to $\C^2 \times \R$. The THF structure is an (involutive) three-dimensional subbundle of the complexified tangent bundle. Similarly, the minimally twisted theory is most generally defined on eleven-dimensional manifolds equipped with a six-dimensional complex distribution.

Flat superspace itself is also canonically equipped with a distribution, spanned by the left–invariant odd vector fields. However, since all bosonic translations are in the image of brackets of supersymmetry transformations, this distribution is as far from being integrable as possible. It is thus not possible to naively draw a connection between these two structures. A clue to the resolution is provided by the theory of Dolbeault cohomology for almost complex manifolds, recently developed in~\cite{ACDolbeault}. This theory uses the distribution $T^{(0,1)}$ to define a filtration of the de Rham complex. The differential on the associated graded measures the nonintegrability of the distribution; passing to its cohomology and transferring the $D_\infty$ structure defined by the remaining terms in the de Rham differential provides a new filtered complex, which they use as a replacement for the Hodge filtration. Passing to the associated graded of this new filtration defines their analogue of the Dolbeault complex.

If we apply the same construction to the de Rham complex on superspace, we can identify the term in the differential encoding the nonintegrability of the odd distribution with the Chevalley--Eilenberg differential of the supertranslation algebra. The ``generalized Dolbeault complex'' that appears is nothing other than the sum of the pure spinor multiplets associated to the cohomology groups of the supertranslation algebra; the cohomology in degree $-k$ plays the role of the Dolbeault complex resolving holomorphic $(k,0)$-forms. In particular, the canonical supermultiplet of~\cite{MSJI} appears playing the role of the holomorphic functions, and we think of it---equipped with its commutative structure---as the appropriate structure sheaf with which to equip the spacetime. In eleven dimensions, this is eleven-dimensional supergravity.

The analogy with complex geometry allows one to find ready generalizations of many interesting notions: the complex dimension is the degree of the highest Lie algebra cohomology of the supertranslations; a Calabi--Yau structure is a trivialization of (the multiplet of) top cohomology as a module over the structure sheaf. In this analogy, eleven-dimensional supergravity, and all of its twists, are Calabi--Yau twofolds. We use this to construct a family of theories we call \emph{homotopy Poisson--Chern--Simons theories}. The construction uses the derived bracket technique of~\cite{Kosmann96}, as generalized by~\cite{VoronovDerived}, and is entirely analogous to the standard construction of the Poisson bracket. However, because we work in a derived setting, the corresponding $L_\infty$ structure is in general not strict. Applying our construction recovers Cederwall's quartic interaction functional in geometric fashion, as well as Costello's maximal twist. Furthermore, it gives a pure spinor lift of the interactions of the minimal twist. It then follows from the results of~\cite{spinortwist}, which state that the twist of a canonical multiplet is the canonical multiplet of the twisted supersymmetry algebra, that these theories are all related by twisting, proving Costello's conjecture on the maximal twist at the full interacting level.

\subsection*{Structural overview} Here is a brief sketch of the structure of the paper. We begin in~\S\ref{sec:two} by recalling how a filtration of the tangent bundle by subbundles gives rise to a filtration of the sheaf of de Rham forms with certain additional properties. We abstract these properties into a notion of (weighted) flag structure on a cdga, and study an abstract version of Cirici and Wilson's generalization of the Fr\"olicher spectral sequence. This lets us define sheaves $W^{-k,\bu}$ of ``Dolbeault-resolved holomorphic $k$-forms'' on any (super)manifold equipped with such a structure. \S\ref{sec:three} reviews the construction of Poisson--Chern--Simons theory on products of Calabi--Yau twofolds and odd-dimensional real manifolds, as used in Costello's descriptions of maximally twisted eleven-dimensional supergravity, and then constructs a generalization to a ``homotopy'' version, defined for any appropriate weighted flag structure. \S\ref{sec:four} begins with a quick review of the pure spinor formalism; we rediscover this formalism here in terms of generalized Dolbeault complexes, but the reader who is unfamiliar with the standard story should start here for a few basic definitions. We then proceed to make some comments on its application to twisted theories and to characterize examples that give rise to ``Calabi--Yau twofolds.'' Finally, \S\ref{sec: 11d} constructs eleven-dimensional supergravity and its twists as examples of homotopy Poisson--Chern--Simons theories.

\subsection*{Acknowledgments}
We would like to give special thanks to: M.~Cederwall and C.~Elliott, for invaluable collaboration at an early stage of this project related to a pure spinor description of (holomorphic eleven-dimensional) supergravity; B. R. Williams and S. Raghavendran, for fruitful ongoing conversations as this work was taking shape, and for past, present, and future collaboration on related projects; J.~Huerta, for numerous insights and conversations, and in particular for his suggestion that Tanaka prolongation should be relevant. We also gratefully acknowledge conversations and collaborations on related topics with I.~Brunner, K.~Costello, R.~Eager, O. Gwilliam, S.~Jonsson, S. Noja, J.~Palmkvist, N.~Paquette, and J. Walcher, and offer special thanks to the anonymous referee for a careful reading of the draft and for many valuable suggestions.
This research was supported in part by Perimeter Institute for Theoretical Physics. Research at Perimeter Institute is supported by the Government of Canada through the Department of Innovation, Science and Economic Development
and by the Province of Ontario through the Ministry of Research, Innovation and Science. FH and IAS both thank the Perimeter Institute for its hospitality, and also M.~Bek, A.~Kiefer, and L.~Steinert for theirs.
IAS also thanks the Mainz Institute for Theoretical Physics of the DFG Cluster of Excellence PRISMA$^+$ (Project ID 39083149) for its hospitality during the workshop on Higher Structures, Gravity, and Fields, where part of this work was performed.
This work is funded by the Deutsche Forschungsgemeinschaft (DFG, German Research Foundation) under Germany’s Excellence Strategy EXC 2181/1 — 390900948 (the Heidelberg STRUCTURES Excellence Cluster) and Projektnummer 517493862 (Homologische Algebra der Supersymmetrie: Lokalität, Unitarität, Dualität), and by the Free State of Bavaria.

\section{Flag structures and generalized Dolbeault complexes}
\label{sec:two}

\numpar[p:conv][Conventions]
Throughout, we work in a category of complex super vector spaces equipped with an action of a semisimple Lie group of the form $\C^\times \times G$. Here, $G$ will depend on context, may be trivial, and will often be left implicit. The action of $\C^\times$ is equivalent to a grading by the integers. We refer to this grading as \emph{weight}, to emphasize that it is not a cohomological grading. The monoidal structure on this category is the one on super vector spaces; signs are thus determined by the $\Z/2$ grading and are independent of the weight. 

We will also consider cochain complexes of objects in this category. These are then equipped with two integer gradings (by cohomological degree and weight), as well as a $\Z/2$-grading by intrinsic parity. The Koszul sign is determined by the sum of cohomological degree and intrinsic parity, modulo two. Our conventions are always cohomological.

\subsection{Weighted flag structures}
We begin with some very general considerations, related to the type of geometric intuition we will draw on in the sequel. The essential point is to notice that certain (super or graded generalizations of) filtered structures, as studied by Tanaka, are present both on (almost) complex manifolds and on the superspaces of interest in physics. The resulting analogy between superspaces and almost complex manifolds will let us construct a sheaf of commutative differential graded algebras on such a manifold, which reproduces Dolbeault cohomology for complex manifolds, as well as its generalization to almost complex manifolds as defined in~\cite{ACDolbeault}. When we apply our techniques to superspaces, the construction naturally reproduces a particular multiplet 
in the pure spinor formalism.\footnote{The unfamiliar reader is referred forward to~\S\ref{sec: pure spinors} for terminology.} This is the multiplet assigned to the structure sheaf of the nilpotence variety, termed the \emph{canonical multiplet} in~\cite{MSJI} (and the \emph{tautological filtered cdgsa} in~\cite{spinortwist}). 

\numpar[p:flagdefs]
Geometrically, we will be interested in manifolds (including supermanifolds or graded manifolds) that are equipped with distributions. 
The definitions we give here are generalizations to the graded setting of standard definitions in the theory of \emph{Tanaka structures}~\cite{Tanaka}.\footnote{We owe deep thanks to John Huerta for calling our attention to the relevance of Tanaka's work.}
    We will not delve more deeply into connections to the theory of Tanaka prolongation, or to parabolic geometry more broadly, here, though these are certainly of great interest. We will return to them in future work; for now, the interested reader is referred to~\cite{Tanaka,Zelenko,AlekDavid,CapSlovak}.

Recall~\cite[chapter 19]{Lee} that a \emph{distribution} on a manifold $M$ is a subbundle $D \subset TM$ of the tangent bundle. A distribution is said to be \emph{involutive} if the space of vector fields lying in $D$ is a subalgebra of vector fields on $M$ with respect to the Lie bracket. A distribution is \emph{integrable} if there exists a regular foliation of $M$ such that the tangent spaces to the leaves agree with $D$ at each point. By Frobenius' theorem~\cite[Theorem 19.12]{Lee}, involutive distributions and integrable distributions coincide, and we will use the terms interchangeably. 

More generally, we can consider a flag of distributions, which is defined to be a finite sequence 
\deq[eq:flag]{
    0 \subset D_1 \subset \cdots \subset D_k = TM
}
of subbundles of the tangent bundle, each contained in the next. 
In other words, $D_\bu$ is a filtration of the tangent bundle.
Our conventions for filtrations follow~\cite[\S1]{DeligneHodge}, with the exception that we will write all filtrations as increasing.

We further require that $D_\bu$ is compatible with the Lie bracket of vector fields, in the sense that
\deq[eq:compat]{
    \left[ \Gamma(D_i), \Gamma(D_j) \right] \subset \Gamma(D_{i+j}).
}
Thus the sections of~$D_\bu$ give $\Vect(M)$ the structure of a filtered Lie algebra.\footnote{We emphasize that a filtered Lie algebra is \emph{not} filtered by sub Lie algebras: it is a filtered vector space with the compatibility~\eqref{eq:compat} between the filtration and the Lie bracket.}
When $k=2$, this condition is vacuous, so that we need only specify a single arbitrary distribution $D_1$. All of our examples here will be of this type.

The associated graded vector bundle $\Gr TM$, with $\Gr_p TM = D_p/D_{p-1}$, acquires the structure of a weight-graded Lie algebra in the category of vector bundles over~$M$. For $X \in \Gamma(D_i)$ and $Y \in \Gamma(D_j)$, one observes that 
\deq{
[X,fY] = X(f)Y + (-)^{|f|\cdot|X|} f[X,Y].
}
The first term is a section of~$D_j$, and is thus equivalent to zero in $\Gamma(\Gr_{i+j} TM)$. It follows that the Lie bracket on the associated graded is linear over functions. The fiber of this bundle of Lie algebras at $p\in M$ is called the \emph{symbol algebra} at~$p$; it is a finite-dimensional, strictly positively graded real Lie algebra. We will call the filtration of the tangent bundle \emph{regular} if the symbol algebras are isomorphic at every point of~$M$. 

\numpar
A filtration of the complexified tangent bundle determines a dual filtration of the complexified cotangent bundle~\cite[\S1.1.6]{DeligneHodge}, according to the rule  
\deq[eq:dualfilt]{
F_{n} T_\C^* = (T_\C /D_{-1-n})^\vee.
}
This convention ensures that $\Gr_p T_\C^* = (\Gr_{-p} T_\C )^\vee$.
We can extend this multiplicatively to a nonpositive filtration $F_\bu \Omega^\bu(M)$ of the complex de Rham forms. Since $D_\bu$ is compatible with the Lie bracket,  $F_\bu$ is preserved by the de Rham differential. 

As an example, consider the flag of distributions on an almost complex manifold defined by taking 
\deq[eq:HodgeFlag]{
   D_0 = 0 \subset  D_1 = T^{(0,1)} \subset D_2 = T_\C .
}
Applying~\eqref{eq:dualfilt}, we see that the dual filtration takes the form
\deq{
F_{-3} T^* = 0  \subset F_{-2} T^* = (T_\C/T^{(0,1)})^\vee \subset  F_{-1} T^*M = T_\C^*.
}
%As is clear from~\eqref{eq:ac-decomp} below, this filtration 
It is compatible with the de Rham differential for any almost complex structure. This filtration is called the ``shifted Hodge filtration'' $\tilde{F}$ in~\cite[Definition 3.5]{ACDolbeault}.

%Note that this filtration, although it is compatible with the de Rham differential, is \emph{not} the standard Hodge filtration. 
%Nor is it particularly convenient in applications. To recover the Hodge filtration, one needs to construct a new filtration $F^+_\bu \Omega^\bu(X)$, defined by taking
%\deq[eq:defF_+]{
%F_+^i \Omega^\bu(X) = \bigoplus_{k+j = i} F^j \Omega^k(X).
%}
%In the example of an almost-complex manifold, we then have that
%\deq{
%F_+^{-i} \Omega^\bu(X) = \Omega^{\geq i,\bu}(X).
%}
%When the complex structure is not integrable, the de Rham differential does not preserve $F^+_\bu$; see~\S\ref{sec:Dinfty} below.

\numpar[p:weighting][Compatible weight gradings]
Matters are simplified when the filtration of the tangent bundle arises from a grading. A compatible weight grading is an isomorphism
\deq{
\phi: \Gr TM \to TM
}
of filtered vector bundles, where $\Gr TM$ is filtered by weight. In other words,
%we have a decomposition of the tangent bundle via a positive integer grading that induces the flag of distributions we are interested in. We will refer to such a grading as a \emph{weight grading}. It consists of a direct sum decomposition of the tangent space of the form
%\deq{
%TM = \bigoplus_{1\leq j \leq k} T_j M,
%}
%such that the flag of distributions we are interested in is recovered by taking
\deq{
    \phi \left( \bigoplus_{1 \leq j \leq k} \Gr_j TM  \right) =  D_k.
}
We will sometimes write $T_j M$ for the summand $\phi(\Gr_j TM)$.
For both almost complex manifolds and superspaces, there is a canonical choice of such a splitting: in the first case by the eigenspaces of $J$, and in the second by the tangent space of the body of~$M$.

A compatible weight grading defines a compatible weight grading (in strictly negative degrees) on the cotangent bundle, and thus also on the de Rham complex of~$M$. 
In the example of an almost complex manifold, the grading assigns weight $-1$ to $\d \bar{z}$ and weight $-2$ to $\d {z}$. 

We draw a diagram of the weight grading in this example in Figure~\ref{fig:hodge}. The cohomological degree is on the vertical axis, and the weight is on the horizontal axis. The filtration $F_\bu$ is the column filtration: $F_{-k}$ consists of all summands with weight $\leq -k$. The dashed arrow represents the Nijenhuis tensor, and the dotted arrow its conjugate; see~\eqref{eq:ac-decomp} below. We remark that the standard Hodge filtration, which is not compatible with the differential in the almost-complex case, appears diagonally in the diagram.

\begin{figure}
\begin{equation*}
\begin{tikzcd}[column sep = 3 ex]
\text{weight:} & 0 & -1 & -2 & -3 & -4 & -5 & -6 \\
3) & & & &  \Omega^{0,3} & \Omega^{1,2} & \Omega^{2,1} & \Omega^{3,0} \\
2) & & &  \Omega^{0,2} \ar[ur,"\dbar"] \ar[urr] \ar[dotted,urrr] & \Omega^{1,1}  \ar[ur] \ar[urr] \ar[dashed, u] \ar[dotted,urrr] & \Omega^{2,0}  \ar[ur]\ar[urr] \ar[dashed, u]  \\
1) & & \Omega^{0,1} \ar[ur,"\dbar"] \ar[urr] \ar[dotted,urrr] & \ar[urr] \Omega^{1,0}\ar[ur] \ar[urr] \ar[dashed, u] \\
0) & \Omega^{0,0}\ar[ur,"\dbar"] \ar[urr] 
\end{tikzcd}
\end{equation*}
\caption{The weight grading arising from an almost-complex structure}
\label{fig:hodge}
\end{figure}

\numpar[p:flagstructure] Motivated by the previous considerations, we now give definitions which are meant to abstractly model the structures that are present on the de Rham complex of a manifold equipped with a flag of distributions (and perhaps with a compatible weight grading).
\begin{dfn}
Let $(\Omega^\bu,\d)$ be a cdga concentrated in nonnegative homological degree. A \emph{flag structure} on~$\Omega^\bu$ is an (ascending) filtration $F_\bu \Omega^k$ of each homogeneous summand, compatible both with the differential and the product, such that $F_0 \Omega^\bu = \Omega^\bu$, $F_{-1} \Omega^\bu = \Omega^{\geq 1}$, and $F_\bu \Omega^k$ is a filtration of finite length in each homological degree $k$. 
    A  \emph{weighted flag structure} on $\Omega^\bu$ consists of a weight grading in non-positive degrees, 
    %with respect to which the differential decomposes into pieces of non-positive weight. In other words,
    such that the column filtration associated to the weight grading is a flag structure.
\end{dfn}

From our perspective, there are two essential and natural examples of flag structures. The first of these, as we have already seen, is the ``shifted'' Hodge filtration on the de Rham complex of an (almost) complex manifold, as in Figure~\ref{fig:hodge}. The second is related to the examples in supersymmetric field theory that we have in mind as applications: any flat superspace is equipped with a canonical distribution, defined by considering the span of all translation-invariant odd vector fields. More generally, the supermanifolds that are valid backgrounds for supersymmetric field theories or supergravity theories are equipped with a regular non-involutive odd distribution of maximal dimension, modelling the local supersymmetry transformations. (This is well-known; consider, for example, the definition of a super Riemann surface~\cite{Friedan,RSV,WittenSuperRiemann}. The idea goes back at least to Manin in~\cite{Manin1,Manin2}, where such a datum is thought of as a \emph{superconformal structure}.)
    
In some sense, the usefulness of the definition lies in the fact that it brings these two examples under one roof. In particular, our main application---to eleven-dimensional supergravity---will rely on exploiting the analogy between the two. To get to these examples, we need to construct the generalization of Dolbeault cohomology to this more general setting. This will be done in the next section. We then move on to discuss examples in~\S\ref{sec: examples}.

\subsection{$D_\infty$ algebras from weighted flag structures}
\label{sec:Dinfty}
\numpar 
Given a weighted flag structure, 
%we can regrade $\Omega^\bu$ with respect to the sum of the weight grading and the cohomological grading. (The filtration associated to this totalized grading recovers $F^+_\bu$.) 
the differential $\d$ decomposes as a sum of terms 
\deq{
\d = \d_1 + \d_0 + \d_{-1} + \cdots,
}
where $\d_i$ has bidegree $(1,i-1)$. 
The subscript labels the \emph{totalized grading}, which is the sum of the cohomological degree and the weight..

The differential on $\Gr F_\bu(\Omega^\bu)$ can be identified with $\d_1$, which is a differential of square zero and bidegree $(1,0)$ on~$\Omega^\bu$.
We will now choose to regard this differential as ``internal,'' and the additional terms $\d_0 + \d_{-1} + \cdots$ as defining a further structure on $\Gr F_\bu(\Omega^\bu)$.

Recall that a square-zero endomorphism of degree one can be thought of as the defining data of an algebra structure over the operad $D$ governing square-zero differentials. (See, for example, \cite{Vallette}.) This operad has a single operation $\d_0$ of arity one and degree one, subject to the relation that its concatenation with itself vanishes.
%We view it as a dg operad in totalized degree zero. (From the perspective of the $D_\infty$ structure, the cohomological degree is the \emph{totalized} degree.)

A $D$-algebra structure on a cochain complex $(V,\d_1)$ is given by a single square-zero endomorphism $\d_0$ of degree one. Since $\d_0$ is a cochain map, $\d_0$ and $\d_1$ anticommute. Thus a cochain complex with a $D$-algebra structure is almost the same thing as a bicomplex, except for the fact that the second grading has been forgotten. To restore it, we specify an action of $\C^\times$ on the operad $D$ with respect to which the nontrivial operation has weight one, and ask for an equivariant $D$-algebra structure on a weighted cochain complex.

Due to the relation $\d_0^2 = 0$, the operad $D$ is not free, and does not play well with quasi-isomorphisms. As is standard in homotopical algebra, we must replace $D$ by a freely generated (weighted) dg operad that resolves it. This operad $D_\infty$ is generated by operations $\d_i$ for each nonpositive $i$, all of which have arity one, cohomological degree one, and weight $i-1$. The conditions defining a $D_\infty$ algebra structure in cochain complexes amount to the condition that the total differential 
\deq{
    \d = \d_1 + \d_0 + \d_{-1} + \cdots
}
is of square zero. Here, $\d_1$ again denotes the internal (weight-zero) differential of the cochain complex and $\d_{i}$ for $i \leq 0$ encode the $D_\infty$ algebra structure.
It is clear that a weighted flag structure defines a $D_\infty$ algebra structure on $\Gr F_\bu(\Omega^\bu)$.

\numpar[sec:HT][Homotopy transfer]
Since $D_\infty$ is a good homotopy replacement for $D$, one can use homotopy transfer of $D_\infty$ algebra structures to pass between different quasi-isomorphic models.\footnote{Our presentation is ahistorical; this technique, better known as the ``homological perturbation lemma,'' dates back to~\cite{Brown} and was probably the first example of homotopy transfer.}
%This encodes, in particular, the higher differentials of spectral sequence of a bicomplex.
We consider a model of $\Omega^\bu$ defined on the $E_1$ page of the spectral sequence associated to~$F_\bu$:
\deq{
    W^\bu \defeq H^\bu\left( \Gr F_\bu(\Omega^\bu)\right) = H^\bu\left(\Omega^\bu, \d_1\right).
}
Since $\d_1$ is homogeneous of bidegree $(1,0)$, $W^\bu$ is again bigraded by cohomological degree and by weight. %---or equivalently, by cohomological degree and totalized degree. We will find it more convenient to work with the totalized degree in the sequel.
We can apply the homotopy transfer theorem for $D_\infty$ algebras~\cite{operadsBook} in order to obtain a new $D_\infty$ algebra structure on $W^\bu$. In concrete terms, this is done by fixing a retraction
\begin{equation}
\begin{tikzcd}
\arrow[loop left]{l}{h}(\Omega^{\bu} \: , \: \d_1)\arrow[r, shift left, "p"] &(W^{\bu}  , \, 0)\arrow[l, shift left, "i"] \: .
\end{tikzcd}
\label{eq: hotop data}
\end{equation}
Although the transfer data depends on the choice of $i,p$ and $h$, we note that the transferred $D_\infty$ structure is unique up to isomorphism~\cite[Theorem 10.3.15]{operadsBook}.

The output of the above construction is a cdga $W^\bu$ with zero internal differential, equipped with a weighted $D_\infty$ structure. 
Alternatively, it is a weighted flag structure on $W^\bu$ with the property that the differential on $\Gr F_\bu(W^\bu)$ vanishes.
We will denote the terms of the $D_\infty$ structure by $\d'_i$ for $i \leq 0$; the term $\d'_i$ has cohomological degree one and weight $i-1$, and is thus of totalized degree $i$.  $\d' = \sum \d'_i$ is a square-zero differential of cohomological degree one and strictly positive weight. %, which now \emph{does} respect the filtration $F^+_\bu W^\bu$ associated to the totalized degree.

\numpar[sec:dec][Reweighting]
Since the differential on $W^\bu$ contains only terms of strictly positive weight, we are free to modify the $\C^\times$ action on~$W^\bu$, shearing it by the cohomological degree. The sheared weight acts on summands of cohomological degree $i$ and weight $j$ with weight $i+j$; it is thus just the totalized degree. Since $\d'$ contained only terms of weight $\leq -1$, it is of nonpositive sheared weight.

We note that the same procedure goes through for flag structures, without the choice of a weighting. We are guaranteed that the differential on $W^\bu$ decreases the filtration, in the sense that $\d'(F_k W^\bu) \subseteq F_{k -1}W^\bu$. Thus the sheared filtration $F^+_\bu$,  
defined by 
\deq{
F^+_k(W^j) \defeq F_{k-j}(W^j),
}
makes $(W^\bu, \d')$ into a filtered cdga that is quasi-isomorphic to~$(\Omega^\bu, \d)$.

The sheared filtration does \emph{not} define a flag structure on~$W^\bu$, since the condition that $\Gr_0 F^+_\bu W^\bu = W^0$ does not hold. (In complex geometry, this is just saying that the Dolbeault complex of $(0,\bu)$-forms is no longer supported in homological degree zero.) Nevertheless, $\Gr_0 F^+_\bu W^\bu$ will become the central character in our story.

\numpar[sec:A][The structure sheaf $A^\bu$; geometric interpretation]
%When applied to a weighted flag structure, 
%If we like, we can therefore repeat the procedure from above. $\Gr F^+_\bu W^\bu$ will be a bigraded cdga with a differential of totalized degree zero. If we were to shift the totalized grading up by the cohomological degree \emph{again}, we would get a $D_\infty$ structure on $\Gr F^+_\bu W^\bu$ with respect to that new grading. However, we will not have cause to do this. Instead, w
We have seen above that, for an integrable complex structure, $F^+_\bu$ is nothing other than the Hodge filtration. As was worked out in~\cite{ACDolbeault}, $F^+_\bu W^\bu$ is the correct object to replace the standard Hodge filtration (and thus the standard Dolbeault cohomology) for non-integrable complex structures. 
Building on this philosophy, we will regard $\Gr F^+_\bu W^\bu = (W^\bu, \d_0')$ as the fundamental object associated to a weighted flag structure. 

For weighted flag structures arising from regular filtrations of the tangent bundle, \S\ref{p:flagdefs} guarantees that $W^\bu$ arises as the smooth sections of a graded vector bundle on the underlying manifold, so that it is a cochain complex of locally free sheaves on the underlying manifold. 
We will allow ourselves to refer to it as the \emph{generalized Dolbeault complex}.
%We will discuss this in detail in examples in the next section.

The complex geometry of a complex manifold is governed by its sheaf of holomorphic functions; a good derived replacement for this sheaf is the sheaf $\Omega^{0,\bu}$ of Dolbeault forms that smoothly resolves it. There is an obvious generalization of this structure sheaf in our setting as well: 
it is just the sheaf of cdgas $A^\bu \defeq \Gr_0 F^+_\bu W^\bu$. (In Figure~\ref{fig:hodge} above, $A^\bu$ appears along the main diagonal in totalized degree zero, after passing to the cokernel of $\d_1$, depicted by dashed arrows.)
%. $W^\bu$ is negatively graded with respect to the totalized grading, so that we can decompose it as a sum
%\deq{
%W^\bu = \bigoplus_{i\leq 0} W^{i,\bu}
%}
%of homogeneous subspaces. This splitting is compatible with the differential on $\Gr F^+_\bu W^\bu$.
%As such, we can consider the cdga $A^\bu := (W^{0,\bu},\d'_0)$ sitting in totalized degree zero; t

Our central point is that \emph{any supermanifold with a regular filtration of the tangent bundle can (and should) be equipped with the structure sheaf $A^\bu$}. As we will see in the next section, applying this construction to examples arising from superspaces produces the canonical supermultiplet---and therefore, among other physically important examples, the eleven-dimensional supergravity multiplet. Pursuing this analogy with complex geometry further will allow us to produce the interactions of eleven-dimensional supergravity from a holomorphic Poisson structure on this ringed space, reproducing and generalizing work of Cederwall~\cite{Ced-towards,Ced-11d}.

\subsection{Examples of weighted flag structures} \label{sec: examples}
\numpar[sec:complex][Complex manifolds]

Let $X$ be a complex manifold; locally, we can equip $X$ with corresponding coordinates $(z^i,\bar{z}^i)$. We consider the de Rham complex on $X$,
\begin{equation}
    \left( \Omega^{\bullet} (X) \: , \: \d = \partial + \bar{\partial} \right) .
\end{equation}
The de Rham differential $\d$ splits into holomorphic and antiholomorphic pieces, the operators $\del$ and $\dbar$.
The cohomological grading is by form degree; to define the weight grading, we assign $\d \bar{z}$ weight zero and $\d z$ weight $-1$. This corresponds to the filtration 
\deq{
   0 \subset  D_1 = T^{(0,1)}X \subset D_2 = T_\C X
   }
   of the complexified tangent bundle, which we have refined to give a weighted flag structure by choosing
   \deq{
       T_1 X = T^{(0,1)} X, \quad T_2 X = T^{(1,0)} X.
   }
   In this example, it is clear that the terms of the decomposition of the differential are
   \deq{
       \d_1 = 0, \quad
       \d_0 = \dbar, \quad
       \d_{-1} = \del,
   }
   with all higher terms vanishing. As a result, $W^\bu$ can be identified with $\Omega^\bu$, and $A^\bu$ is the Dolbeault complex $\Omega^{0,\bu}(X)$.

\numpar[sec:almost][Almost complex manifolds]
Nothing in the construction of the weighted flag structure above depended on the integrability of the complex structure. In fact, the construction generalizes immediately to almost complex manifolds, with the difference that $D_1$ is no longer involutive. 
Correspondingly, the internal differential $\d_1$ no longer vanishes. We recover the theory of Dolbeault cohomology for almost complex manifolds, as worked out in~\cite{ACDolbeault}.

On an almost complex manifold, the de Rham differential decomposes as
	\begin{equation}
            \d = \Bar{\mu} + \dbar + \partial + \mu ,
            \label{eq:ac-decomp}
	\end{equation}
        where $\mu$ and its complex conjugate $\Bar\mu$ are related to the Nijenhuis tensor. No other terms are present. Defining the weighted flag structure considered above, we see that
        \deq{
            \d_1 = \Bar\mu, \quad
            \d_0 = \dbar, \quad
            \d_{-1} = \del, \quad
            \d_{-2} = \mu.
        }
	Crucially, the Dolbeault differential $\dbar$ no longer squares to zero, such that standard Dolbeault cohomology is no longer well defined. 
But we can nevertheless construct $W^\bu$ by 
first passing to the cohomology of~$\bar{\mu}$:
\begin{equation}
	W^{\bu} = H^\bu(\Omega^{\bu}(X) , \bar{\mu}) .
\end{equation}
This reproduces the construction of the Dolbeault cohomology of an almost complex manifold, as defined in~\cite{ACDolbeault}. Homotopy transfer as $D_\infty$ algebras then produces a $D_\infty$ structure on~$W^\bu$, which plays the role of the Hodge-to-de-Rham spectral sequence in this case. 

We note that the first term in the differential, $\d_1 = \Bar\mu$, can be thought of as encoding the failure of the corresponding flag of distributions  to  be  integrable. (In the theory of filtered structures, one would say that the symbol of the flag of distributions fails to be abelian.) This is further  illustrated by the next examples.

\numpar[sec:super][Superspaces and the canonical supermultiplet]
Let $\fn$ be a supertranslation algebra in the sense of~\cite{perspectives}: a super Lie algebra with a consistent weight grading supported in degrees one and two. ``Consistent'' means that parity equals weight modulo two: thus $\fn = \fn_1 \oplus \fn_2$, where $\fn_1$ has weight one and odd internal parity and $\fn_2$ has weight two and even internal parity. Let $N = \exp(\fn)$ be the corresponding flat superspace. 
The de Rham complex
\begin{equation}
	\left( \Omega^{\bu}(N) , \d_{\dR} \right) = \left( C^\infty(N_+)[\theta, \d\theta, \d x] \:  ,  \: \d x \frac{\partial}{\partial x}+ \d\theta \frac{\partial}{\partial \theta} \right)
\end{equation}
is then a cdga equipped with a weight grading.\footnote{We will sometimes think in terms of the \emph{totalized grading} when discussing $\fn$; when doing this, $\fn$ is a \emph{cohomologically} graded Lie algebra in degrees one and two. But we will emphasize such usage wherever it appears.}

We can define a flag of distributions in $TN$ by choosing $D_1$ to be spanned by the odd left-invariant vector fields $\left(\Vect(N)^N\right)_-$, and $D_2$ to be just $TN$. In physical examples in three or more dimensions, $D_1$ is always bracket-generating, since every translation is the square of some supercharge. Thus the distribution we consider is maximally noninvolutive.

This flag of distributions defines a weighted flag structure on~$\Omega^\bu(N)$. Concretely, we can express the de Rham complex in a left-invariant basis
\begin{equation}
	\lambda = \d \theta, \qquad
	v = \d x + \lambda \theta.
\end{equation}
$\lambda$ carries weight $-1$ and $v$ weight $-2$ (just as in~\S\ref{p:weighting}).  The de Rham differential is
\begin{equation}
\d_{\dR} = \lambda^2 \frac{\partial}{\partial v} + \lambda \left( \frac{\partial}{\partial \theta}  -  \theta \frac{\partial}{\partial x}\right) + v \frac{\partial}{\partial x} .
\end{equation}
Note that we suppress the contractions in the notation when there is no ambiguity.\footnote{For example, in terms of the structure constants $f^\mu_{\alpha \beta}$ of $\fn$, we have $\lambda^2 = \lambda^\alpha f^\mu_{\alpha \beta} \lambda^\beta$ and $\lambda \theta = \lambda^\alpha f^\mu_{\alpha \beta} \theta^\beta$. Here $\alpha, \beta$ label a basis of $\fn_1$ and $\mu$ a basis of $\fn_2$.}
The totalized grading on the de Rham complex is just given by 
the polynomial degree in $v$. The differential decomposes by weight into the terms
\begin{equation} \label{eq: ps-diff-split}
\begin{split}
\d_{1} & = \lambda^2 \frac{\partial}{\partial v}, \\
\d_0 &= \lambda \frac{\partial}{\partial \theta} - \lambda \theta \frac{\partial}{\partial x}, \\
\d_{-1} &= v \frac{\partial}{\partial x} \: .
\end{split}
\end{equation}
As we will see explicitly in~\S\ref{sec: pure spinors}, the generalized Dolbeault complex
%\begin{equation}
$W^\bullet$
% = \left( H^\bullet(\Omega^\bullet(N) , \d_{1})  , \: \d_0' \right)
%\end{equation}
has a natural interpretation within the pure spinor superfield formalism. In particular, the degree zero piece $W^{0,\bu}$ coincides with the canonical multiplet
of $\fn$~\cite{MSJI}; the analogue of the Dolbeault resolution of holomorphic $p$-forms is given by the multiplet associated to the $(-p)$-th Lie algebra cohomology of the supertranslation algebra $\fn$, with respect to the totalized degree. These multiplets were discussed in detail in physical examples in~\cite{perspectives}; the acyclic deformation of the differential arising from the strictly negative terms in~$\d$ was defined, and worked out concretely in examples, in~\cite{EHSequiv}.

\numpar[sec:Tanaka][Further examples; (flat) distributions of constant symbol]
In the previous sections, we have already gone through the examples that will interest us in detail in the remainder of the paper. Our main aim here is to set up the analogy between almost complex geometry and superspace by viewing them both as weighted flag structures, and to exploit this to give a geometric construction of interacting eleven-dimensional supergravity and its twists. However, numerous other structures could be viewed through this lens, and we feel it would be profitable to do so. We give a partial list of such examples, to which we hope to return in future work.
\begin{itemize}
\item[---] Any contact manifold has a flag structure.
\item[---] Any manifold equipped with a Tanaka structure~\cite[Definition 1]{TanakaAlt} has a flag structure on its de Rham complex.
\item[---] Let $\fn$ be a super Lie algebra equipped with a positive weight grading. Following~\cite{Zelenko}, we can consider the flat Tanaka structure with constant symbol $\fn$. By definition, this is the simply connected super Lie group $N = \exp(\fn)$, equipped with the flag of distributions spanned by the left-invariant vector fields in~$\fn_{\leq j}$. We observe that flat superspace is a particular example of such a flat Tanaka structure, with symbol the supertranslation algebra. It should be possible to consider non-strict examples (super $L_\infty$ algebras with positive weight gradings), using results of Getzler~\cite{Getzler}.
\item[---] Any Lie algebra equipped with a finite-length positive filtration gives rise to a flag structure on its Chevalley--Eilenberg cochains.
\item[---] Any filtered Lie algebroid gives rise to a flag structure on its Lie algebroid cochains. This is a clear generalization, both of the previous example and of a flag of distributions in the tangent bundle of a manifold. 
It should be possible to extend this definition to Courant algebroids, following~\cite{RoytenbergThesis}, and then to understand potential connections to exceptional generalized geometry. In particular, connections of Tanaka prolongation to tensor hierarchy algebras~\cite{JakobTHA} should be interesting to explore.
\end{itemize}

\section{Poisson--Chern--Simons theories via derived brackets}
\label{sec:three}

\subsection{Holomorphic Poisson--Chern--Simons theory} \label{sec: pcs}
In this section, we briefly review the construction of the standard Poisson--Chern--Simons theory, defined on a product of a Calabi--Yau twofold and an odd-dimensional smooth manifold. The theory is $\Z$-graded only when the smooth manifold is one-dimensional. Poisson--Chern--Simons theory was related to the maximal twist of eleven-dimensional supergravity in a particular omega background by Costello in~\cite{CostelloMtheory2}.

\numpar
Let $X$ be a Calabi--Yau twofold with holomorphic volume form $\Omega$. In complex dimension two, $\Omega$ is also a holomorphic symplectic structure. 
We denote the corresponding holomorphic Poisson bivector by $\pi = \Omega^{-1}$.

Recall from~\S\ref{sec:complex} above that the totalized grading places $\d z$ in degree $-1$ and $\d \Bar{z}$ in degree zero. Our construction above recovers the standard Dolbeault complex (equipped with a nonstandard grading): 
\deq{
W^\bu = \Omega^\bu(X), \quad
\d_0 = \dbar, \quad
\d_{-1} = \del.
}
Contracting with $\pi$ defines an isomorphism of $\Omega^{0,\bu}(X)$-modules
\begin{equation}
\pi: \left( \Omega^{2,\bullet}(X) \: , \: \bar{\partial} \right) \longrightarrow \left( \Omega^{0,\bullet}(X) \: , \: \bar{\partial} \right) ,\qquad 
\alpha \mapsto \pi \vee \alpha.
\end{equation}

\numpar
One can now use this data to equip the Dolbeault complex $\Omega^{0,\bu}(X)$ with the structure of a cyclic $L_\infty$ algebra. This can be done in two steps:
\begin{itemize}
	\item[1.] Turn $\Omega^{\bu}(X)$ into a BV algebra.
	\item[2.] Define the Poisson bracket on $\Omega^{0,\bu}(X)$ as a derived bracket of the BV bracket.
\end{itemize}
For the first step, note that the commutator $\Delta = [\pi , \partial]$ defines a second-order differential operator acting on $\Omega^{\bullet}(X)$, satisfying $\Delta^2 = 0$ and $\Delta(1) = 0$. Hence, we can define the Koszul bracket on $\Omega^{\bullet}(X)$ by
\begin{equation}
\{\alpha , \beta \} = (-1)^{|\alpha|}(\Delta(\alpha \beta) -\Delta(\alpha) \beta ) - \alpha \Delta(\beta) \: ,
\end{equation}
making $(\Omega^{\bullet}(X) , 1, \Delta ,\{-,-\})$ into a BV algebra. This construction is due to Koszul~\cite{Koszul}.

For the second step, we employ the derived bracket construction with respect to the differential $\partial$, as described by~\cite{Kosmann96}. The derived bracket is defined by
\begin{equation}
	[-,-]_{\partial} := \{\partial(-) , - \} .
\end{equation} 
Crucially, this bracket does not turn all of $\Omega^{\bu}(X)$ into a Lie algebra; only after restricting to an abelian subalgebra (with respect to the underived bracket $\{-,-\}$) does $[-,-]_\partial$ have the right symmetry properties. It is easy to check that the Dolbeault complex $\Omega^{0,\bu}(X)$ is indeed such a subalgebra; from this, it follows that
\begin{equation}
	\left( \Omega^{0,\bu}(X) \: , \: \bar{\partial} \: , \: [-,-]_{\partial}  \right) 
\end{equation}
is a dg Lie algebra.

Evaluating $[-,-]_\partial$ on $\alpha, \beta  \in \Omega^{0,\bullet}(X)$, we find
\begin{equation}
[\alpha , \beta]_\partial = \{\partial \alpha, \beta \} =  \pi(\partial \alpha \wedge \partial \beta) \: ,
\end{equation}
recovering the well known formula for the Poisson bracket. Together with the pairing induced by wedging with the holomorphic volume form $\Omega$ and integration, this makes $(\Omega^{0,\bullet}(X) , \bar{\partial}, [-,-]_\partial)$ into a cyclic $L_\infty$ algebra---indeed, a local $L_\infty$ algebra~\cite[Definition 3.1.3.1]{CG2} with a cyclic structure of degree $-2$. Tensoring with the de Rham forms of an odd-dimensional smooth manifold gives a local $L_\infty$ algebra with an odd-shifted cyclic structure on the product manifold. The corresponding $\Z/2$-graded BV theory is called holomorphic Poisson--Chern--Simons theory. Denoting the odd dimensional smooth manifold by $M$, the BV action of the theory evaluated on an element $\alpha \in \Omega^{(0,\bu)}(X) \otimes \Omega^\bu(M)$ reads
\begin{equation}
	S_{BV}(\alpha) = \int_{X \times M} \Omega \wedge \alpha \left( \frac{1}{2}(\bar{\partial}_X + \d_M) \alpha + \frac{1}{6} [ \alpha , \alpha ]_\partial \right) .
\end{equation}

\subsection{Homotopy Poisson--Chern--Simons theory} \label{sec: hpcs}
We now generalize the above setting to the context of~\S\ref{sec:two} in order to construct a ``homotopy'' version of Poisson--Chern--Simons theory. 
\numpar
Let $(\Omega^{\bu} , \d)$ be a cdga equipped with a weighted flag structure, and let $(W^\bu, \d')$ be the corresponding generalized Dolbeault complex. Let us assume that, with respect to the totalized grading, $W^{\bu}$ is concentrated in degrees $0$, $-1$, and $-2$. For degree reasons, the differential then splits into three pieces
\begin{equation}
	\d' = \d'_0 + \d'_{-1} + \d'_{-2} \: .
\end{equation}
Explicitly, these terms arise via homotopy transfer along the diagram~\eqref{eq: hotop data}.
\begin{equation} \label{eq: trans-diff}
\begin{split}
\d'_{0} &= i \circ \d_0 \circ p \\
\d'_{-1} &= i \circ \left( \d_0 h \d_0  + \d_{-1} \right) \circ p \\
\d'_{-2} &= i \circ \left( (\d_0 h)^2 \d_0 + \d_0 h \d_{-1} + \d_{-1} h \d_0 \right) \circ p
\end{split}
\end{equation}
Note that the square zero condition for $\d'$ implies the following identities:
\begin{equation} \label{eq: differential identities}
\begin{split}
(\d'_0)^2 &= 0 \\
[\d'_{-1}, \d'_0] &= 0 \\
(\d'_{-1})^2 + [\d'_0, \d'_{-2}] &= 0  \\
[\d'_{-1} , \d'_{-2}] &= 0 \\
(\d'_{-2})^2 &= 0.
\end{split}
\end{equation}
Here, the bracket $[-,-]$ denotes the commutator of endomorphisms. As all terms are of cohomological degree one, these are all symmetric. We further assume that there is an isomorphism 
\begin{equation}
	\pi: (W^{-2, \bu} \: , \: \d'_0) \longrightarrow (W^{0,\bu} \: , \: \d'_0 )
\end{equation}
of $W^{0,\bu}$-modules.

In summary, the $D_\infty$ structure and the pairing $\pi$ act on $W^{\bu,\bu}$ as indicated by the following diagram.
\begin{equation}
	\begin{tikzcd}
	W^{0,\bu} \arrow[rr, out= -45, in = -135 , "\d'_{-2}"] \arrow[loop , distance = 2em, "\d'_0"' ,out=65, in = 115] \arrow[r,"\d'_{-1}"] & W^{-1,\bu} \arrow[r,"\d'_{-1}"] \arrow[loop, distance = 2em, "\d'_0"' ,  out=65, in = 115] & W^{-2,\bu} \arrow[loop, distance = 2em, "\d'_0"' ,  out=65, in = 115] \arrow[ll, in= 145 , out = 35, distance = 7em, "\pi"']
	\end{tikzcd}
\end{equation}

\numpar
From this data, we now construct an $L_\infty$ structure on $W^{0,\bu}$. For this purpose we perform the appropriate generalizations of the steps described in~\S\ref{sec: pcs}. 
\begin{itemize}
	\item[1.] Turn $W^{\bu}$ into a $BV_\infty$ algebra. 
	\item[2.] Define an $L_\infty$ structure on $A^\bu = W^{0,\bu}$ using a derived bracket construction.
\end{itemize}
We will see that both steps can be viewed as instances of the derived bracket construction described by~\cite{VoronovDerived,BaVorLinf}.

\numpar
We begin by recalling the definition of a $BV_\infty$ algebra.
\begin{dfn}
	A $BV_\infty$ algebra $(A,\Delta,1)$ is a unital graded commutative algebra over $\CC$ together with a degree one linear map $\Delta : A \longrightarrow A[\![t]\!]$ which can be expanded as
	\begin{equation}
		\Delta = \frac{1}{t} \sum_{k=1}^{\infty} t^k \Delta_k ,
	\end{equation}
	such that $\Delta_k$ is a differential operator of order at most $k$ and
	\begin{equation}
		\Delta^2 = 0 \quad \text{and} \quad \Delta(1) = 0 .
	\end{equation}
\end{dfn}

One can equip both $A[\![t]\!]$ and $A$ with $L_\infty$ structures in the following way.
By identifying an element $a \in A$ by the endomorphism given by left multiplication with $a$, we can embed $A$ as an abelian subalgebra into its graded Lie algebra of endomorphisms, $(\End(A) , [-,-])$. The other way round, evaluating an endomorphism at the unit gives a right inverse to this embedding. One can define a series a series of brackets on $A[\![t]\!]$ by the following formulas~\cite{VoronovDerived}.
\begin{equation}
\{a_1, \dots , a_n \}_t = [\dots [\Delta , a_1], \dots ,a_n] (1) .
\end{equation}
This makes $A[\![t]\!]$ into an $L_\infty$ algebra. Note that the unary bracket is just given by $\Delta$, while the binary bracket is then given by the well known formula for BV algebras
\begin{equation}
\{a_1,a_2\} = \Delta(a_1 a_2) - \Delta(a_1) a_2 -(-1)^{|a_1|} a_1 \Delta(a_2) .
\end{equation}
In general, the $n$-ary bracket can be thought of as measuring the failure of the $(n-1)$-ary bracket to be a multiderivation with respect to the algebra structure.

Further, we can extract an $L_\infty$ algebra structure on $A$ by taking an appropriate limit for the parameter $t$. We define
\begin{equation}
	\{a_1, \dots , a_n\} = \lim_{t \rightarrow 0} \frac{1}{t^{n-1}} \{a_1 , \dots , a_n\}_t.
\end{equation} 
The limit makes sense because $\Delta_k$ is a differential operator of order at most $k$.
Note that, for this $L_\infty$ structure, the $n$-ary operation is generated by $\Delta_n$, i.e.
\begin{equation}
	\{a_1, \dots , a_n\} = [\dots [\Delta_n , a_1], \dots , a_n] (1) \: .
\end{equation}

\numpar
Coming back to our setting, we define the operator 
\begin{equation}
	\Delta = \Delta_1 + t\Delta_2 + t^2\Delta_3 = \d'_0 + t [\pi, \d'_{-1}] + t^2 [\pi,[\pi , \d'_{-2}]] 
\end{equation}
on $W^{\bu}[\![t]\!]$.
A direct calculation shows the following proposition. 
\begin{prop}
\label{prop:W-BV-infty}
	$(W^{\bu}, \Delta,1)$ is a $BV_\infty$ algebra. Furthermore, $W^{0,\bu}$ is an abelian subalgebra, and $W^{<0,\bu}$ is a subalgebra with respect to the bracket $\{-,-\}$.
\begin{proof}
	These statements can be shown by direct calculations. For example, we can examine $\Delta^2=0$ order by order in $t$.
	Recall the identities~\eqref{eq: differential identities} for the $D_\infty$-algebra structure on $W^{\bu}$. At order $t^0$, $\Delta^2=0$ is just the square-zero condition for $\d'_0$, while the $t^1$-term vanishes since $\d'_{-1}$ and $\d'_0$ anti-commute. For the $t^2$-piece we find
	\begin{equation}
		[\pi, \d'_{-1}]^2 + [\d'_0, \pi \d'_{-2} \pi] .
	\end{equation}
	Recall that $(\d'_{-1})^2 = -[\d'_0,\d'_{-2}]$. For degree reasons, the only term contributing to the first summand is $\pi (\d'_{-1})^2 \pi$, for which we find
	\begin{equation}
		\pi (\d'_{-1})^2 \pi = - \pi[\d'_0, \d'_{-2}] \pi = -[\d'_0 , \pi \d'_{-2} \pi] ,
	\end{equation}
	using compatibility between the the pairing and $\d'_0$. All higher order pieces vanish for degree reasons.
	The other claims are verified by similar calculations and degree arguments.
\end{proof}
\end{prop}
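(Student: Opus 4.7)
The plan is to verify the three defining axioms of a $BV_\infty$ algebra in turn, and then to derive the two subalgebra claims as short consequences. Throughout, I would exploit two structural properties of $\pi$: first, the compatibility $[\d'_0, \pi] = 0$, which holds because $\pi$ is assumed to be a chain map of $W^{0,\bu}$-modules $(W^{-2,\bu}, \d'_0) \to (W^{0,\bu}, \d'_0)$; and second, the support condition that $\pi$ is nontrivial only on $W^{-2,\bu}$, so in particular $\pi^2 = 0$ as an operator on $W^\bu$.

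The heart of the argument is $\Delta^2 = 0$, which I would expand as a polynomial in $t$ and check coefficient by coefficient. The $t^0$ piece $(\d'_0)^2$ vanishes by the first identity of~\eqref{eq: differential identities}. For the $t^1$ piece $[\d'_0, [\pi, \d'_{-1}]]$, I would use the graded Jacobi identity together with $[\d'_0, \pi] = 0$ to rewrite this as $[\pi, [\d'_0, \d'_{-1}]]$, which vanishes by the second identity of~\eqref{eq: differential identities}. The $t^2$ piece $[\pi, \d'_{-1}]^2 + [\d'_0, [\pi, [\pi, \d'_{-2}]]]$ is the delicate one: the support condition on $\pi$ kills three of the four terms in the square, leaving only $\pi (\d'_{-1})^2 \pi$, and the identity $(\d'_{-1})^2 = -[\d'_0, \d'_{-2}]$ rewrites this in a form that cancels against the second summand after commuting $\d'_0$ past $\pi$. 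The $t^3$ and $t^4$ coefficients are handled by repeated use of $\pi^2 = 0$.

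The axiom $\Delta(1) = 0$ follows because $\d'_0, \d'_{-1}, \d'_{-2}$ are derivations of the graded commutative product on $W^\bu$ and hence annihilate $1$, while $\pi(1) = 0$ since $1 \in W^{0,\bu}$ lies outside the support of $\pi$. The order estimates on the $\Delta_k$ reduce to Koszul-type reasoning: $\Delta_1 = \d'_0$ is a derivation, so of order at most $1$; and each graded commutator with the $W^{0,\bu}$-linear operator $\pi$ raises the order of a differential operator by at most one, inductively giving the required bounds on $\Delta_2$ and $\Delta_3$.

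Finally, the two subalgebra claims. The binary Koszul bracket $\{-,-\}$ associated to $\Delta_2$ vanishes on $W^{0,\bu}$, since $\Delta_2$ itself vanishes on $W^{0,\bu}$: both $\pi$ and $\pi \circ \d'_{-1}$ land outside the support of $\pi$ when applied there. Thus $W^{0,\bu}$ is abelian. For $W^{<0,\bu}$, a totalized-degree count shows that the bracket has totalized degree $+1$, and tracking degrees confirms that elements of $W^{<0,\bu}$ bracket to elements of $W^{<0,\bu}$. The principal obstacle I anticipate is the careful sign and support bookkeeping in the $t^2$ calculation for $\Delta^2 = 0$; the remaining steps are essentially organized degree chasing against the given algebraic data.
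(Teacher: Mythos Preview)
Your proposal is correct and follows essentially the same approach as the paper. Both proofs verify $\Delta^2=0$ order by order in~$t$, invoking the identities~\eqref{eq: differential identities} and the chain-map/support properties of~$\pi$ at each stage; your $t^2$ analysis is identical to the paper's, and your treatment of higher orders via $\pi^2=0$ is what the paper packages as ``degree reasons.'' You in fact supply more detail than the paper on~$\Delta(1)=0$, the order estimates for the~$\Delta_k$, and the two subalgebra claims, all of which the paper leaves under ``direct calculations'' and ``degree arguments.'' One small caveat: your justification for the order bounds (``commutator with the $W^{0,\bu}$-linear operator~$\pi$ raises the order by at most one'') is imprecise as stated, since $\pi$ is not $W^\bu$-linear and its order as a differential operator on~$W^\bu$ is not immediate from $W^{0,\bu}$-linearity alone; the clean argument is rather that contraction with a bivector is order~$\leq 2$ in the Koszul sense, whence $[\pi,\d'_{-1}]$ is order~$\leq 2$ and $[\pi,[\pi,\d'_{-2}]]$ order~$\leq 3$. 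This does not affect the overall correctness of your plan.
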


\numpar
Proposition~\ref{prop:W-BV-infty} sets the stage for the second step. We now apply the derived bracket construction to the differential
\begin{equation}
	\d^t = \d'_0 + t\d'_{-1} + t^2 \d'_{-2} .
\end{equation}
Again, this first endows $W^{0,\bu}[\![t]\!]$ with an $L_\infty$ structure
\begin{equation}
	\mu^t_n(a_1, \dots ,a_n) = \{ \dots \{\d^t ,a_1 \}, \dots a_n\}
\end{equation}
and then finally $W^{0,\bullet}$ by taking the limit
\begin{equation}
	\mu_n = \lim_{t \rightarrow 0} \frac{1}{t^{n-1}} \mu_n^t
\end{equation}
The $L_\infty$ structure then takes the following form
\begin{equation}
	\begin{split}
	\mu_1(\alpha) &= \d'_0 \alpha \\
	\mu_2(\alpha , \beta) &= \{\d'_{-1} \alpha , \beta\} \\
	\mu_3(\alpha, \beta, \gamma) &= \{ \{\d'_{-2} \alpha , \beta \} , \gamma\} .
	\end{split}
\end{equation}

It is useful to express this $L_\infty$ structure in terms of the pairing $\pi$.
\begin{prop} \label{prop: L-inf}
	For $\alpha,\beta,\gamma \in W^{0,\bu}$ we have
	\begin{equation}
	\begin{split}
	\mu_2(\alpha, \beta) &= \pi(\d'_{-1} \alpha \cdot \d'_{-1} \beta) \\
	\mu_3(\alpha,\beta,\gamma) &= \pi(\d'_{-2} \alpha \cdot \pi(\d'_{-1} \beta \cdot \d'_{-1} \gamma)) .
	\end{split}
	\end{equation}
\end{prop}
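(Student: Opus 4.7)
The plan is to unpack the derived-bracket formulas $\mu_2(\alpha,\beta) = \{\d'_{-1}\alpha,\beta\}$ and $\mu_3(\alpha,\beta,\gamma) = \{\{\d'_{-2}\alpha,\beta\},\gamma\}$ directly, using the binary BV bracket on $W^\bu$ generated by $\Delta_2 = [\pi,\d'_{-1}]$. Three simple observations drive every simplification: $\pi$ is nontrivial only as a map $W^{-2,\bu}\to W^{0,\bu}$ (extended by zero elsewhere), so in $\Delta_2 = \pi\d'_{-1} - \d'_{-1}\pi$ many contributions vanish for degree reasons; the product on $W^\bu$ is homogeneous for the weight grading, so each weight-component $\d'_i$ of the transferred differential is a derivation of the algebra structure; and $\pi$ is $W^{0,\bu}$-linear, so $W^{0,\bu}$-factors pass freely through $\pi$.

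For $\mu_2$, the standard BV expansion reads
\begin{equation*}
\{\d'_{-1}\alpha,\beta\} = \Delta_2(\d'_{-1}\alpha\cdot\beta) - \Delta_2(\d'_{-1}\alpha)\cdot\beta - (-1)^{|\d'_{-1}\alpha|}\,\d'_{-1}\alpha\cdot\Delta_2(\beta).
\end{equation*}
The final term vanishes because $\beta$ and $\d'_{-1}\beta$ lie outside $W^{-2,\bu}$. In the first term, use the derivation property of $\d'_{-1}$ to write $\d'_{-1}(\d'_{-1}\alpha\cdot\beta) = (\d'_{-1})^2\alpha\cdot\beta + (-1)^{|\d'_{-1}\alpha|}\d'_{-1}\alpha\cdot\d'_{-1}\beta$, and then pull $\beta$ through $\pi$ via $W^{0,\bu}$-linearity. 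The resulting $(\d'_{-1})^2\alpha$ piece cancels precisely against the second term of the expansion, leaving $\pi(\d'_{-1}\alpha\cdot\d'_{-1}\beta)$ up to a fixed Koszul sign.

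For $\mu_3$, first compute the inner bracket $\{\d'_{-2}\alpha,\beta\}$ by the same procedure. The key new input is that $W^\bu$ is concentrated in totalized degrees $\{0,-1,-2\}$, so $\d'_{-1}\d'_{-2}\alpha\in W^{-3,\bu}=0$; this kills the analogue of the $(\d'_{-1})^2\alpha$ term. After applying $W^{0,\bu}$-linearity and Leibniz, the remaining cancellations collapse the inner bracket to $\{\d'_{-2}\alpha,\beta\} = \pm\,\pi(\d'_{-2}\alpha)\cdot\d'_{-1}\beta \in W^{-1,\bu}$. Feeding this into the outer bracket $\{-,\gamma\}$ and repeating the $\mu_2$-style argument yields $\pi\bigl(\pi(\d'_{-2}\alpha)\cdot\d'_{-1}\beta\cdot\d'_{-1}\gamma\bigr)$, which a final application of $W^{0,\bu}$-linearity of $\pi$ rewrites as $\pi\bigl(\d'_{-2}\alpha\cdot\pi(\d'_{-1}\beta\cdot\d'_{-1}\gamma)\bigr)$, as claimed.

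The main obstacle is the Koszul-sign bookkeeping in the $\mu_3$ computation, and the need to organize the cancellations so that only the stated contribution survives. Each cancellation rests on the $D_\infty$-identities~\eqref{eq: differential identities}, the Leibniz property of each $\d'_i$ (which uses that the product on $W^\bu$ is weight-preserving), and the $W^{0,\bu}$-linearity of $\pi$; individually each of these is straightforward, but they must be coordinated correctly to produce the clean module-theoretic expressions on the right-hand side.
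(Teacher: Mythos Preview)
Your proposal is correct and follows essentially the same route as the paper's proof: expand the binary BV bracket generated by $\Delta_2=[\pi,\d'_{-1}]$, discard terms for degree reasons (in particular $\Delta_2\beta=0$ and, for the inner bracket in $\mu_3$, the $\pi\d'_{-1}$-part of $\Delta_2$ vanishes on $W^{-2,\bu}$ since $W^{-3,\bu}=0$), then use that each $\d'_i$ is a derivation together with the $W^{0,\bu}$-linearity of $\pi$ to effect the remaining cancellations. The paper carries out exactly these steps, arriving at $\{\d'_{-2}\alpha,\beta\}=(\pi\d'_{-2}\alpha)\cdot\d'_{-1}\beta$ and then iterating the $\mu_2$-style computation; your description matches this in all essentials.
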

\begin{proof}
	For $\mu_2$ we have
	\begin{equation}
		\{\d'_{-1} \alpha , \beta\} = (-1)^{|\alpha|} \left( \pi \d'_{-1}(\d'_{-1} \alpha \cdot \beta)- (\pi (\d'_{-1})^2 \alpha ) \cdot \beta \right) \: ,
	\end{equation}
	where we already used that $[\pi,\d'_{-1}] \beta = 0$ by degree reasons. Using that $\d'_{-1}$ is a derivation for the multiplication, we find the desired result.
	
	For $\mu_3$, note that
	\begin{equation}
	\begin{split}
		\{\d'_{-2} \alpha, \beta \} &= (-1)^{|\alpha|} \left( \d'_{-1} \pi( \d'_{-2} \alpha \cdot \beta) - (\d'_{-1} \pi \d'_{-2} \alpha)  \cdot \beta \right) \\
		&= (\pi \d'_{-2} \alpha) \cdot \d'_{-1} \beta  \in W^{-1,\bu} ,
	\end{split}
	\end{equation}
	where we used that $\pi$ is an isomorphism of $W^{0, \bu}$-modules in the second step. Thus, we find
	\begin{equation}
	\begin{split}
		\{ \{\d'_{-2} \alpha, \beta\} ,\gamma\} &= \{(\pi \d'_{-2} \alpha) \d'_{-1} \beta , \gamma \} \\
		&= (-1)^{|\alpha| + |\beta|} \left[ \pi \d'_{-1} \left( (\pi \d'_{-2} \alpha) (\d'_{-1} \beta) \gamma \right) - \pi \d'_{-1} \left( (\pi \d'_{-2} \alpha) (\d'_{-1} \beta) \right) \cdot \gamma \right] \\
		&= \pi \left( (\pi \d'_{-2} \alpha) \: \d'_{-1} \beta \cdot \d'_{-1} \gamma \right)
	\end{split}
	\end{equation}
	Again, using that $\pi$ is a map of $W^{0,\bu}$-modules, we find the desired result.
\end{proof}
In the examples we are interested in and which we will discuss in the following sections, $W^{0,\bu}$ is local, i.e. arising as a sheaf of $L_\infty$ algebras on some manifold, and equipped with a pairing making it a cyclic $L_\infty$ algebra. In these instances, $(W^{0,\bullet} , \d'_0 , \mu_2 , \mu_3)$ defines a perturbative interacting BV theory, perhaps after tensoring with the de Rham complex of a smooth manifold to correct for the parity of the cyclic structure. Since the $L_\infty$ structure describing the interactions is no longer strict, we refer to such a theory as a homotopy Poisson--Chern--Simons theory.

\section{Calabi--Yau twofolds from certain Gorenstein rings}
\label{sec:four}

The construction of interactions in homotopy Poisson--Chern--Simons theory can be applied to supersymmetric field theories and their twists just by working with the example of~\S\ref{sec:super}---that is, with the standard odd distribution on superspace. 
As we will show more explicitly below, this automatically places us in the context of the pure spinor superfield formalism, as presented in~\cite{NV, perspectives, EHSequiv}.
(For the broader literature on pure spinor superfield techniques in field theory, we refer to the review~\cite{Cederwall} and to references therein.)
It remains only to check which superspaces give rise to weighted flag structures satisfying the conditions of~\S\ref{sec:three}. Of the standard superspaces that appear in physics, there are precisely three examples, corresponding to eleven-dimensional minimal supersymmetry and its two distinct twists.

We begin with a very brief reminder on the pure spinor superfield formalism, as well as its relation to twisting worked out in~\cite{spinortwist}. Then we remark on the algebraic conditions required for the generalized Dolbeault complex $(W^\bu, \d'_0)$ of a superspace to have the properties of the Dolbeault complex of a Calabi--Yau twofold, and thus to give rise to a homotopy Poisson--Chern--Simons theory using the techniques of~\S\ref{sec:three}. In~\S\ref{sec:five} below, we will show that the resulting theories are eleven-dimensional supergravity and its maximal and minimal twists.

\subsection{Pure spinor superfields for twisted field theories} 
\label{sec: pure spinors}
\numpar
In the standard pure spinor superfield formalism, one begins with the data of~\S\ref{sec:super}: a weight-graded super Lie algebra of the form
\deq{
\fn = \Pi \fn_1 \oplus \fn_2,
}
which one thinks of as a supertranslation algebra. Choosing a subalgebra $\fp_0$ of the degree-zero derivations of~$\fn$ defines an extension of the form
\deq{
0 \rightarrow \fn \rightarrow \fp \rightarrow \fp_0 \rightarrow 0,
}
which plays the role of the super Poincar\'e algebra. In the standard examples, $\fp_0$ is a direct sum $\fp_0 = \so(d) \oplus \fg_R \oplus \lie{gl}(1)$ of an orthogonal Lie algebra and another Lie algebra called $R$-symmetry, together with the abelian factor that defines the weight. $\fn_1$ is required to be a spin representation of $\so(d)$, while the degree-two piece $\fn_2 = V$ is isomorphic to the vector representation.

\numpar
There is a correspondence between supertranslation algebras and generating sets of quadratic ideals in polynomial rings. Let $R = \sym^\bu(\fn_1^\vee)$ denote the ring of polynomial functions of $\fn_1$, graded by weight. For $Q\in \fn_1$, the equations $[Q,Q]=0$ define a homogeneous quadratic ideal $I \subset R$. The quotient ring $R/I$ is the ring of functions of the space (more properly, scheme) of square-zero odd elements in $\fn$, 
\begin{equation}
	Y = \Spec(R/I),
\end{equation}
which is called the \emph{nilpotence variety} of~$\fn$~\cite{NV}.
Conversely, we can produce a super Lie algebra of supertranslation type from any finite sequence of quadratic equations. Let $R = \C[\lambda_1 ,\dots , \lambda_n]$ be the polynomial ring in $n$ variables and $I$ an ideal generated by the equations,
\begin{equation} \label{eq: ideal}
I = (\lambda^\alpha f^\mu_{\alpha \beta} \lambda^\beta), \qquad \mu = 1 \dots d, \quad \alpha , \beta = 1 \dots n .
\end{equation}
We define $\fn$ to be the two-step nilpotent super Lie algebra
\begin{equation}
\fn = \Pi S (-1) \oplus V (-2) ,
\end{equation}
equipped with the indicated weight grading.
Here $S \cong \C^n$, $V \cong \C^d$, and the only non-trivial bracket is the map
\begin{equation}
[-,-] : \sym^2(S) \longrightarrow V ,
\end{equation}
generated by the equations~\eqref{eq: ideal}---in other words, with structure constants $f_{\alpha\beta}^\mu$.

\numpar
The \emph{pure spinor superfield formalism} is a functor
\begin{equation}
A^\bullet_{R/I} : \Mod_{R/I}^{\fp_0} \longrightarrow \Mult_{\fp} 
\end{equation}
from $\fp_0$-equivariant modules over the quotient ring $R/I$---in other words, from equivariant sheaves on~$Y$---to the category of $\fp$-multiplets.
%\footnote{Throughout this work, we use the notion of multiplet as defined in~\cite{perspectives}; in short 
A $\fp$-multiplet is a cochain complex of super vector bundles, equivariant for the natural action of $\fp_+$ by affine transformations, and equipped with a homotopy action of the super Poincaré algebra $\fp$ that extends this equivariance. (More details are given in~\cite{perspectives}.)

 %One quick way %(for details see~\cite{EHSequiv}) 
 %to understand the action of 
 The functor $A^\bu_{R/I}$ is easy to understand. One recalls that the smooth functions $C^\infty(N)$ on superspace admit two commuting actions of~$\fn$, by the left- and right-invariant vector fields $Q_a$ and~$D_a$, where the index $a$ refers to a basis in~$\fn_1$. Then one constructs the cdga
 \begin{equation} \label{eq: canonical}
	A^\bu := \left( C^\infty(N) \otimes_\C R/I \: , \: \lambda^a D_a \right) ,
\end{equation}
graded by placing the generators of~$R/I$ in cohomological degree one.
In coordinates, the right-invariant vector fields take the explicit form
\deq{
D = \frac{\partial}{\partial \theta} - \theta \frac{\partial}{\partial x},
}
with the structure constants of~$\fn$ appearing in the second term. This cdga acquires the structure of a multiplet through the left-invariant fields $Q_a$; it is also an algebra over~$R/I$. 
The functor $A^\bu_{R/I}$ is just the tensor product over~$R/I$ with $A^\bu$:
\begin{equation}
	A^\bu_{R/I}(\Gamma) := A^\bu \otimes_{R/I} \Gamma ,
\end{equation} 
where $\Gamma$ is any $\fp_0$-equivariant $R/I$-module. 
$A^\bu$ was called the ``canonical multiplet'' in~\cite{MSJI} and ``the tautological filtered cdgsa'' in~\cite{spinortwist}; in the prior literature, it is often just called the (scalar) pure spinor superfield. We note that $A^\bu = A^\bu_{R/I}(R/I)$.

While this functor can be used to produce many multiplets of physical interest and freely resolve them over $C^\infty(N)$, it is not an equivalence of categories. 
As explained in~\cite{EHSequiv}, this can be remedied by extending the formalism using ideas from derived geometry. 
The extension replaces $Y$ by the cdga $C^\bu(\fn)$ placing $A^\bullet_{R/I}$ in the following diagram.
\begin{equation}
\begin{tikzcd}
\Mod_{R/I}^{\fp_0} \arrow[d] \arrow[r, "A^\bullet_{R/I}"] & \Mult_{\fp} \arrow[dl, shift left, "C^\bullet"] \\
\Mod_{C^\bullet(\fn)}^{\fp_0} \arrow[ru, shift left, "\hat{A}^\bullet"]
\end{tikzcd}
\end{equation}
Here $C^\bu = C^\bu(\fn,-)$ denotes the functor of Chevalley--Eilenberg cochains, which models the derived $\fn$-invariants of a multiplet. The functors $\hat{A}^\bu$ and $C^\bu$ establish an equivalence of dg-categories between $C^\bu(\fn)$-modules and $\fp$-multiplets.

The derived uplift is a version of Koszul duality, which is easiest to see by thinking about the sheaf $\Omega^\bu$ of de Rham forms on superspace. The cdga of global sections is acyclic, and the left-invariant de Rham forms (which are isomorphic to the cochains of~$\fn$ with trivial coefficients) clearly map to it:
\deq{
C^\bu(\fn) \cong \Omega^\bu(N)^N \subset \Omega^\bu(N).
}
Furthermore, $\fn$ acts on $\Omega^\bu(N)$ by left-invariant vector fields, preserving this subalgebra. As such, $\Omega^\bu(N)$ is a resolution of the ground field in $(C^\bu(\fn), U(\fn))$-bimodules. 
Tensoring on either side gives a correspondence between the two module categories which witnesses the equivalence.

\numpar
The explicit connection between the pure spinor superfield formalism and our discussion above is now clear. Applying the pure spinor superfield functor to $C^\bu(\fn)$ itself recovers $\Omega^\bu(N)$, expressed in the left-invariant frame discussed in~\S\ref{sec:super} (see~\cite[Lemma 3.8]{EHSequiv} for the proof):
\begin{equation}
\hat{A}^\bullet(C^\bullet(\fn)) \cong (\Omega^\bullet(N) , \d_{\dR}).
\end{equation}
Recall that the differential splits according to~\eqref{eq: ps-diff-split}; the internal differential $\d_1$ now coincides with the Chevalley--Eilenberg differential $\d_{CE}$ on $C^\bu(\fn)$. Taking cohomology with respect to $\d_{1}$, we thus recover that
\begin{equation}
	\Gr F^+_\bu W^\bu = A_{R/I}^\bu(H^\bu(\fn))
\end{equation}
via an isomorphism of sheaves of cdga's.
The differential $\d_0$ is the standard pure spinor superfield differential, so that the generalized Dolbeault complex in totalized degree $-k$---the analogue of the holomorphic $k$-forms---consists of the supermultiplet associated by $A^\bu_{R/I}$ to the Lie algebra cohomology group $H^{-k}(\fn)$, again in the totalized grading. In particular, we recover the canonical multiplet as described in~\eqref{eq: canonical} in degree zero,
\begin{equation}
	(W^{0,\bu},\d_0') = A^\bu_{R/I}(R/I),
\end{equation}
such that \emph{the canonical multiplet is identified with the structure sheaf of superspace.}
This justifies our notation $A^\bu = A^\bu_{R/I}(R/I) = (W^{0,\bu},\d_0')$ from above.

\numpar
The nilpotence variety also classifies the possible twists of theories with $\fp$ supersymmetry; these are obtained by taking invariants of a square-zero odd symmetry.
Fixing such an element $Q \in Y$, we can twist the algebra itself by defining a dg Lie algebra $\left( \fp \: , \: [Q,-] \right)$. Its cohomology $\fp_Q = H^\bu(\fp , [Q,-])$ is again a graded Lie algebra in degrees zero to two and should be viewed as the residual symmetry algebra of any theory twisted by $Q$; we denote its nilpotence variety (which encodes the possible further twists of the $Q$-twisted theory) by $Y_Q$.

We call the positively graded piece of the cohomology
\begin{equation}
	\fn_Q = H^{>0}(\fp, [Q,-])
\end{equation}
the twisted supertranslation algebra. Sometimes it is convenient to work with a quasi-isomorphic dg model for $\fn_Q$ which keeps all the even translations in degree two. To this end we define a dg Lie algebra $\tilde{\fn}_Q$ by throwing away the degree zero piece of $(\fp, [Q,-])$ while simultaneously replacing its degree one piece by the cokernel of the adjoint action of $Q$,
\begin{equation}
	\tilde{\fn}_Q = \left( \fp^1 / \Im([Q,-]) (-1) \oplus \fp^2(-2) \: , \: [Q,-] \right).
\end{equation}

Importantly, we can apply the pure spinor superfield formalism not only to $\fp$, but also to any of its twists. The results of~\cite{spinortwist} indicate that the twisting procedure commutes with the pure spinor superfield construction. In concrete terms, this means that
\begin{equation}
A^\bu(\cO_Y)^Q \cong A^\bu(\cO_{Y_Q}) \:,
\end{equation}
such that the twist of the canonical multiplet is isomorphic to the canonical multiplet of the twisted algebra.

This means that the operation of twisting is, in a sense, fully internal to the superspace: any construction which relies only on the ``(almost) complex geometry'' of the weighted flag structure of a superspace, as encoded in its generalized Dolbeault complex, should behave in \emph{exactly the same way} in any twist. (Recall, for example, that the full Dolbeault complex can be reconstructed algebraically from $\Omega^{0,\bu}$ by considering the module of K\"ahler differentials. We can thus think of the acyclic $D_\infty$ structure we construct on~$W^\bu$ as related, at least loosely, to the algebraic de Rham cohomology of the dg space $\Spec A^\bu$.)

Any theory admits a maximal twist in which $A^\bu$ reduces to a mixed Dolbeault--de~Rham complex of the standard type.\footnote{Maximal twists are characterized by being smooth points of~$Y$; the deformation problem governed by $\fn_Q$ is trivial, and no further twists are possible. Maximal twists are not unique; we are here interested in the maximal twist corresponding to the highest-dimensional smooth orbit in~$Y$.} If $n$ is the number of surviving translations in this maximal twist for a theory in $d$ dimensions, then 
\deq{
A^\bu(\O_Y)^Q = \Omega^{0,\bu}(\C^n) \otimes \Omega^\bu(\R^{d-2n}).
}
In light of the above considerations, we can bootstrap information about this maximal twist: if we have a description of an interacting theory that uses only information about the complex geometry of~$\C^n$ (or, more precisely, the THF structure on~$\C^n \times \R^{d-2n}$), \emph{then exactly the same construction should give a pure spinor model for the untwisted interacting theory---or for any other twist---when applied to the corresponding generalized Dolbeault complex.}

\subsection{The defect, the effective dimension, and the maximal twist}
To apply the construction of~\S\ref{sec: hpcs} in the pure spinor superfield formalism, we thus need to specify conditions that guarantee the existence and appropriate properties of the pairing $\pi$. In particular, we would like the generalized Dolbeault complex $W^\bu$ to exhibit the properties of the Dolbeault complex of a Calabi--Yau twofold.

\numpar
Let us fix a supertranslation algebra $\fn$ with corresponding polynomial ring $R=\sym^\bu(\fn_1^\vee)$ together with $\dim(\fn_2)$ generators for the quadratic ideal $I$ and nilpotence variety $Y$. We call the number
\begin{equation}
\mathrm{def}(\fn) = \dim(Y) - (\dim(\fn_1) -\dim(\fn_2)) = \dim(\fn_2) - \codim(Y)
\end{equation}
the \emph{defect} of $\fn$.\footnote{If $Y$ is not equidimensional, we take $\dim(Y)$ to denote the maximum of the dimensions of its irreducible pieces.} Roughly, it measures how far the generators of the ideal $I$ are from forming a regular sequence.
The following proposition shows that the defect governs the support of the Chevalley--Eilenberg cohomology of $\fn$, and thus the ``complex dimension'' of $\Spec A^\bu$.
\begin{prop}
	Let $R/I$ be a Cohen--Macaulay ring. Then, $\mathrm{def}(\fn)$ is the smallest non-negative number such that $H^{-i} (\fn)\neq 0$ for all $i \geq \mathrm{def}(\fn)$.
\end{prop}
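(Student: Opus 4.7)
The plan is to identify the Chevalley--Eilenberg complex computing $H^\bu(\fn)$ with a Koszul complex on the defining generators of~$I$, and then to invoke classical commutative algebra on the support of Koszul homology for Cohen--Macaulay quotients of polynomial rings.

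Concretely: because $\fn = \Pi S(-1) \oplus V(-2)$ is two-step nilpotent with bracket $\sym^2(\fn_1)\to \fn_2$ encoded by the structure constants $f^\mu_{\alpha\beta}$, the Chevalley--Eilenberg cochain complex $C^\bu(\fn)$ with trivial coefficients is isomorphic as a bigraded cdga to $R\otimes\Lambda^\bu[v^1,\ldots,v^n]$. Here $R=\sym^\bu(\fn_1^\vee)=\C[\lambda^1,\ldots,\lambda^m]$ with $m=\dim\fn_1$, and each exterior generator $v^\mu$ sits in cohomological degree one and weight~$-2$, i.e.\ in totalized degree~$-1$. The Chevalley--Eilenberg differential is the unique derivation extending $v^\mu\mapsto \lambda^\alpha f^\mu_{\alpha\beta}\lambda^\beta=f^\mu(\lambda)$, which is precisely the Koszul differential on the sequence $\mathbf{f}=(f^1,\ldots,f^n)$ generating $I\subset R$. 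Grading by $v$-count, the piece in totalized degree $-i$ carries exactly $i$ copies of~$v$, yielding
\begin{equation*}
H^{-i}(\fn)\;\cong\;H_i(K_\bu(\mathbf{f};R)).
\end{equation*}

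Since $R$ is a polynomial ring it is Cohen--Macaulay, so $\mathrm{grade}(I,R)=\mathrm{ht}(I)=\codim(Y)$. I then invoke the classical theorem (see, e.g., Bruns--Herzog, Theorem~1.6.17) that for a sequence $\mathbf{f}$ of length~$n$ generating an ideal of grade~$g$ in a Noetherian ring,
\begin{equation*}
H_i(K_\bu(\mathbf{f};R))=0 \;\;\text{for } i>n-g,\qquad H_{n-g}(K_\bu(\mathbf{f};R))\neq 0.
\end{equation*}
Here $n-g=\dim\fn_2-\codim(Y)=\mathrm{def}(\fn)$, so this immediately pins down the top of the support of $H^\bu(\fn)$: vanishing strictly above $\mathrm{def}(\fn)$ together with non-vanishing at $\mathrm{def}(\fn)$. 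This extremal/depth-theoretic input uses only Cohen--Macaulayness of~$R$ and is entirely standard.

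The remaining content of the proposition is non-vanishing of $H^{-i}(\fn)$ throughout the range $0 \leq i \leq \mathrm{def}(\fn)$, i.e.\ that the support has no gaps. This is where the Cohen--Macaulay hypothesis on $R/I$---as opposed to just on~$R$---does genuine work. My plan is to invoke local duality: CM-ness of $R/I$ ensures that the canonical module $\omega_{R/I}$ is a non-zero maximal Cohen--Macaulay $R/I$-module, which produces a non-degenerate pairing between the bottom Koszul homology $H_0(K_\bu)=R/I$ and the top Koszul homology $H_{\mathrm{def}(\fn)}(K_\bu)$ (up to an appropriate twist). Combined with the natural $\Lambda^\bu[v]$-algebra structure on Koszul homology, multiplication by suitable elements of the exterior algebra then exhibits nonzero classes in every intermediate degree. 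I expect this no-gaps step to be the main technical obstacle: the extremal statement is classical, whereas the bookkeeping required to exhibit non-vanishing throughout the range---via the interaction of the module structure with the canonical-module pairing---is what needs care.
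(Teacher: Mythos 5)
Your core argument coincides with the paper's: identify $C^\bu(\fn)$ with the Koszul complex $K_\bu(\mathbf{f};R)$ on the chosen generators of $I$ (with $H^{-i}(\fn)\cong H_i(K_\bu(\mathbf{f};R))$), apply depth sensitivity to locate the top nonvanishing homology at $\dim(\fn_2)-\mathrm{grade}(I,R)$, and use $\mathrm{grade}(I,R)=\mathrm{ht}(I)=\codim(Y)$. The paper cites Eisenbud, Theorem 17.4, where you cite Bruns--Herzog, but the content is identical, and the paper's proof in fact stops there. (You are also right that this step only uses Cohen--Macaulayness of the polynomial ring $R$; the stated hypothesis on $R/I$ is not what is actually invoked.)

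Where you go further is the ``no gaps'' claim, and there your proposal has a genuine gap of its own. The route through local duality and the canonical module does not do what you want: identifying the top Koszul homology with (a twist of) $\omega_{R/I}$ gives no control over the intermediate groups, and the final step---``multiplication by suitable elements of the exterior algebra exhibits nonzero classes in every intermediate degree''---is unjustified. Products of positive-degree classes in the Koszul homology algebra can vanish identically (for Golod rings they all do), so the algebra structure cannot be relied on to populate the intermediate degrees. Fortunately the statement you are after is much easier than you fear: it is the standard \emph{rigidity} of the Koszul complex. In the graded setting, an induction on the number of generators using the long exact sequence relating $K_\bu(f_1,\dots,f_n)$ to $K_\bu(f_1,\dots,f_{n-1})$ together with graded Nakayama shows that $H_i(\mathbf{f};M)=0$ forces $H_j(\mathbf{f};M)=0$ for all $j\geq i$; since $H_0=R/I\neq 0$ and $H_{\mathrm{def}(\fn)}\neq 0$ by depth sensitivity, no intermediate group can vanish. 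Replacing your duality paragraph by this appeal to rigidity (which, note, again uses no hypothesis on $R/I$) completes the argument; as written, that paragraph is the one step that would fail.
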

\begin{proof}
	Recall that the Chevalley--Eilenberg complex of $\fn$ is the Koszul complex on our set of generators for the ideal $I$. Let $H^{-n}(\fn)$ be the top cohomology group. By depth sensitivity (see for example~\cite[Theorem 17.4]{Eisenbud}) of the Koszul complex one has
	\begin{equation}
	\mathrm{depth}(I,R) = \dim(V) - n .
	\end{equation}
	The Cohen--Macaulay condition implies $\mathrm{depth}(I,R) = \codim(Y)$ implies the claim. 
\end{proof}

\numpar
We can further define a local version of the defect for any orbit in the nilpotence variety. For $Q \in Y$ we set
\begin{equation}
\mathrm{def}(Q) = \dim(V) - \codim(P_0 \cdot Q) .
\end{equation}
The following lemma shows that the defect of $Q$ is equal to the number of surviving translations in a twist by $Q$.

\begin{lem}
	$\mathrm{def}(Q) = \dim(H^2(\fn, [Q,-]))$. 
\end{lem}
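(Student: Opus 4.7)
The plan is as follows. Since $\fn$ is concentrated in weight degrees $1$ and $2$, and $[Q,-]$ has cohomological degree $+1$, the differential equips $\fn$ with the structure of a two-term complex
$$
\fn_1 \xrightarrow{\,[Q,-]\,} \fn_2.
$$
Hence $H^2(\fn, [Q,-]) = \fn_2/[Q,\fn_1]$, and the problem reduces to identifying $\dim[Q,\fn_1]$ with $\codim(P_0\cdot Q)$ in $\fn_1$.

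The main computation proceeds by duality. Dualizing $[Q,-]\colon \fn_1 \to \fn_2$ yields a map $[Q,-]^\vee\colon \fn_2^\vee \to \fn_1^\vee$ sending a generator $\mu \in \fn_2^\vee$---corresponding to one of the defining quadratic equations $\lambda^\alpha f^\mu_{\alpha\beta}\lambda^\beta$ of the ideal $I \subset R$---to the linear functional $X \mapsto f^\mu(Q,X)$. This functional is precisely the differential of the $\mu$-th defining equation at $Q$, so the image of $[Q,-]^\vee$ is the span of the differentials at $Q$ of all equations cutting out $Y$, i.e.\ the conormal space to $Y$ at $Q$. By rank-nullity this gives
$$
\dim[Q,\fn_1] \;=\; \dim \Im([Q,-]^\vee) \;=\; \dim \fn_1 - \dim T_Q Y,
$$
where $T_Q Y$ is the Zariski (scheme-theoretic) tangent space.

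The final step would identify $\dim T_Q Y$ with $\dim(P_0 \cdot Q)$. The inclusion $T_Q(P_0\cdot Q) \subseteq T_Q Y$ is automatic because the orbit lies inside $Y$; equality holds as soon as $Q$ sits on an open orbit of its irreducible component of $Y$, which is the situation for each of the orbits corresponding to the twists of interest in this paper. Assembling the three steps,
$$
\dim H^2(\fn, [Q,-]) \;=\; \dim V - \bigl(\dim \fn_1 - \dim(P_0\cdot Q)\bigr) \;=\; \dim V - \codim(P_0\cdot Q) \;=\; \mathrm{def}(Q).
$$
The substantive point, and the only place where anything subtle enters, is this last identification of the scheme-theoretic tangent space of $Y$ at $Q$ with the tangent space to the orbit: it is a transversality statement about the $P_0$-action on $Y$ that is automatic at smooth points of $Y$ lying on open orbits but could in principle require care on deeper strata.
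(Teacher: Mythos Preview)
Your argument is essentially the paper's own proof. Both reduce to showing $\dim[Q,\fn_1]=\codim(P_0\cdot Q)$ by first computing $\dim[Q,\fn_1]=\dim\fn_1-\dim\ker([Q,-]|_{\fn_1})$ (you route this through the dual map and the conormal description of $T_QY$, the paper applies rank--nullity to $[Q,-]$ directly), and both then need to identify $\ker([Q,-]|_{\fn_1})$---equivalently the Zariski tangent space $T_QY$---with $T_Q(P_0\cdot Q)$. The paper asserts this last identification without comment (``We can identify the tangent space with $\ker([Q,-])$''); you have correctly isolated it as the only substantive step and named the transversality hypothesis under which it holds.
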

\begin{proof}
	Recall that
	\begin{equation}
	H^2(\fn, [Q,-]) \cong V/\Im([Q,-]) \: .
	\end{equation}
	The map $[Q,-]$ induces an isomorphism
	\begin{equation}
	\fn_1 / \ker([Q,-]) \longrightarrow \Im([Q,-]) \subseteq V.
	\end{equation}
	Let $P_0 \cdot Q$ denote the orbit of $Q$ inside $Y$. We use the inclusion $i: (P_0 \cdot Q) \hookrightarrow \fn_1$ to pull back the tangent space of $\fn_1$ at $Q$. The pullback splits as a direct sum:
	\begin{equation}
	i^*T_Q \fn_1 \cong T_Q(P_0 \cdot Q) \oplus N_Q (P_0 \cdot Q) .
	\end{equation}
	We can identify the tangent space with $\ker([Q,-])$ and the normal space with the quotient $\fn_1/\ker([Q,-])$. Thus, we find in particular
	\begin{equation}
	\codim(P_0 \cdot Q) = \dim(N_Q(P_0 \cdot Q)) = \dim(\fn_1/\ker([Q,-])) .
	\end{equation}
	and therefore
	\begin{equation}
		\begin{aligned}
	\mathrm{def}(Q) &= \dim(V) - \dim(\fn_1/\ker([Q,-])) \\
	&= \dim(V/\Im([Q,-])) = \dim(H^2(\fn,[Q,-])) ,
	\end{aligned}
	\end{equation}
	proving the claim.
\end{proof}
It follows from the proposition that the defect of $\fn$ is the local defect evaluated at a maximal twist lying in an orbit of maximal dimension. (Note that this is neither the maximum, nor the minimum, value of the local defect; $Y$ need not be---and often is not---equidimensional.)

\numpar[p:Gorenstein][Gorenstein rings of defect two]
Let us fix a supertranslation algebra $\fn$ of defect two such that the quotient ring $R/I$ is both Gorenstein and strongly Cohen--Macaualay.\footnote{A quotient ring $R/I$ is called strongly Cohen--Macaulay, when all Koszul homology groups (for $R/I$ viewed as an $R$-module) are Cohen--Macaulay~\cite{Golod}.} By construction, the zeroth Chevalley--Eilenberg cohomology of $\fn$ yields,
\begin{equation}
	H^0(\fn) = R/I .
\end{equation}
Further, $H^\bullet(\fn)$ is concentrated in degrees $0,-1$ and $-2$. Since $R/I$ is strongly Cohen--Macaulay, $H^\bullet(\fn)$ is a Poincar\'e duality algebra~\cite{AvramovGolod, Golod}. In particular, we have
\begin{equation}
	H^{-2}(\fn) \cong \Ext_R^{\codim(Y)} (R/I, R) \cong R/I ,
\end{equation}
where we used the Gorenstein property for the last identification.
Thus, there is an isomorphism of $A^\bu(H^0(\fn))$-modules
\begin{equation}
\pi : \left( A^\bullet(H^{-2}(\fn)) \: , \: \d'_0 \right) \longrightarrow \left( A^\bu(H^0(\fn)) \:  , \:  \d'_0 \right) .
\end{equation}
As we assumed that the defect of the supertranslation algebra equals two, transfer of the $D_\infty$ along~\eqref{eq: hotop data} yields an induced $D_\infty$ structure given by~\eqref{eq: trans-diff}.

Hence, we are in the situation described in~\S\ref{sec: hpcs} and can construct an $L_\infty$ structure on $A^\bullet(H^0(\fn))$. Furthermore, the Gorenstein property implies that there is another pairing on $A^\bullet(H^0(\fn))$, (see~\cite{perspectives}), making it a cyclic $L_\infty$ algebra and hence an interacting BV theory (after taking the product with an odd-dimensional smooth manifold to adjust the parity of the cyclic structure, if necessary).

\section{Eleven-dimensional supergravity, both twisted and not}
\label{sec:five}

As mentioned above, there are three significant examples of ``Calabi--Yau twofolds'' that arise from superspaces relevant to physics. They are all connected to eleven-dimensional supergravity: either the full theory, or one of its two twists. In this section, we review the construction of these Gorenstein rings of defect two, and then construct the corresponding homotopy Poisson--Chern--Simons theories. These recover Cederwall's pure spinor formulation of eleven-dimensional supergravity, Costello's description of the maximal twist in terms of holomorphic Poisson--Chern--Simons theory, and a pure spinor lift of the interactions of minimally twisted eleven-dimensional supergravity described in~\cite{RSW11d}. We also recall how the rings are related to one another by twists of the corresponding super Poincar\'e algebras, which shows (following~\cite{spinortwist}) that the three interacting theories are also related by twisting.

\subsection{Eleven-dimensional supersymmetry and its twists} \label{sec: 11d}
\numpar
Let $V$ denote the vector representation for $\Spin(11)$ and $S$ the unique spinor representation of dimension 32. The super Poincar\'e algebra in eleven dimensions is of the form
\begin{equation}
	\fp = \so(V) \oplus S (-1) \oplus V (-2) .
\end{equation}
The nilpotence variety $Y \subset S$ is of dimension 23, so that $\mathrm{def}(Y)=2$. Furthermore, its coordinate ring, which is the quotient of polynomial functions on~$S$ by the quadratic ideal generated by the eleven gamma matrices, is a Gorenstein ring. In this sense, the generalized Dolbeault cohomology of eleven-dimensional superspace describes a Calabi--Yau twofold.
Furthermore, the structure sheaf of this space is nothing other than the eleven-dimensional supergravity multiplet, described with a pure spinor superfield in the BV formalism~\cite{HowePS2, CederwallM5}. As was emphasized in~\cite{spinortwist,MSJI}, eleven-dimensional supergravity is a canonical supermultiplet, and is thus equipped with a \emph{commutative} structure on the space of fields.

\numpar[p:twists][Twists]
The nilpotence variety decomposes into two orbits for $\Spin(V)$, as such, there are two distinct twists available. The maximal twist is holomorphic in four directions and topological in the remaining seven, the minimal twist is holomorphic in ten directions and topological in the remaining one. The maximal twist is a smooth point of~$Y$, whereas the minimal twist corresponds to a singular point. As is well-known~\cite{BNCharacter}, the singularities take the form of the cone over the projective variety $\Gr(2,5)$. 
The stabilizer of a minimal supercharge is $\SU(5)$, whereas the stabilizer of a maximal supercharge is $G_2 \times \SU(2)$.
(For more details on the geometry, see~\cite{NV} and the references therein.)

Applying the pure spinor functor to the coordinate rings of the twisted nilpotence varieties $Y_Q$, one obtains the BV complexes of the free twisted theories. We can now apply our results to construct interactions for these theories in all these cases in a uniform way, realizing them as homotopy Poisson--Chern--Simons theories. By the results of~\cite{spinortwist}, the resulting theories are then obtained from one another---in particular, from eleven-dimensional supergravity---by taking the corresponding twist.

We will begin by describing the maximal twist, and work up to the full theory.

\subsection{The maximal twist}
In~\cite{CostelloMtheory2} a description of the maximal twist in terms of Poisson--Chern--Simons theory was proposed. The twist was computed in the free limit using a component field description in~\cite{MaxTwist}, and realized as a further twist of the minimal twist in~\cite{RSW11d}.
\numpar[p:maxstab][Decomposition under the stabilizer]
 The maximal twist on flat spacetime is defined on $\R^7 \times \CC^2$. We begin by decomposing all relevant $\Spin(11)$-representations to $G_2 \times \SU(2)\times \U(1)$. It is useful to remember the inclusions of subgroups,
\begin{equation}
	\Spin(11) \supset \Spin(7) \times \Spin(4) \supset G_2 \times \SU(2)_+ \times \SU(2)_- \supset G_2 \times \SU(2)_+ \times \U(1) ,
\end{equation}
where we identified $\Spin(4) \cong \SU(2)_+ \times \SU(2)_-$ and the $\U(1)$ appearing in the last step is the Cartan of $\SU(2)_-$. Under this subgroup, the vector representation of $\Spin(11)$ decomposes as
\begin{equation}
	V = V_7 \oplus L \oplus L^\vee ,
\end{equation}
where $V_7$ is the seven-dimensional irreducible representation of $G_2$ and $L \cong \textbf{2}^1$ (as well as $L^\vee \cong \textbf{2}^{-1}$) as $\SU(2) \times \U(1)$-representations. The spin representation gives
\begin{equation}
	S = (\textbf{1}_{G_2} \oplus V_7) \otimes (\textbf{2}^0 \oplus \textbf{1}^1 \oplus \textbf{1}^{-1}) .
\end{equation}
We immediately see that $S$ contains two copies of the trivial representation of $G_2 \times \SU(2)$, coming with $\U(1)$ weights $\pm1$. These correspond to the maximal square-zero supercharges. For definiteness, we choose
\begin{equation}
	Q \in \textbf{1}_{G_2} \otimes \textbf{1}^{-1} .
\end{equation}

\numpar[p:maxalg][Twisting the supersymmetry algebra]
Remembering that $\so(V) \cong \wedge^2 V$ and that, as $G_2$-representations,
\begin{equation}
	\wedge^2 V_7 \cong V_7 \oplus \fg_2 ,
\end{equation}
we can decompose the dg Lie algebra $(\fp, [Q,-])$ as shown in Table~\ref{tab:1}.
\begin{table}[ht]
\begin{equation}
	\begin{tikzcd}[row sep = tiny]
	V_7 \arrow[r] &	V_7 \otimes \textbf{1}^{-1} & \\
	\fg_2&	 V_7 \otimes \textbf{1}^1 \arrow[r] &		V_7\\
	V_7 \otimes \textbf{2}^1 \arrow[r] &	V_7 \otimes \textbf{2}^0 & \textbf{2}^1\\
	V_7 \otimes \textbf{2}^{-1} & \textbf{2}^0 \arrow[r] & \textbf{2}^{-1} \\
	\textbf{1}^0 \arrow[r] & \textbf{1}^{-1}\\
	\textbf{1}^{2} \arrow[r] & \textbf{1}^1 \\
	\textbf{1}^{-2} \\
	\textbf{3}^0 \\
	\end{tikzcd}
\end{equation}
\caption{Decomposition under the stabilizer}
\label{tab:1}
\end{table}
Here, the arrows represent the map $[Q,-]$. By Schur's lemma, all non-vanishing arrows are multiples of the identity; thus it is immediate to compute the cohomology. Identifying the holomorphic translations as $\textbf{2}^1 = L$, we find a purely even Lie algebra of the form
\begin{equation}
	\fp_Q = H^\bu(\fp,[Q,-]) = \left( \fg_2 \oplus \mathfrak{sl}(L) \oplus V_7 \otimes \textbf{2}^{-1} \oplus \textbf{1}^{-2} \right) \oplus  L (-2) .
\end{equation}
We note that the positively graded piece $\fn_Q$ is just the abelian even algebra $L$. The dg model $\tilde{\fn}_Q$ is of the form 
%\begin{table}
	\begin{equation} \label{eq: dg twisted transl}
	\begin{tikzcd}[row sep = tiny]
	V_7 \otimes \textbf{1}^1 \arrow[r] & V_7 \\
	\textbf{2}^0 \arrow[r] & \textbf{2}^{-1} \\
	& \textbf{2}^{1} 
	\end{tikzcd}.
	\end{equation}
%	\caption{The dg model $\tilde{\fn}_Q$.}
%\end{table}
Correspondingly, $\O_{Y_Q}= \C$, and the nilpotence variety is just a point.
Note that both the dimension as well as the codimension are zero. As there are two surviving translations, the defect is thus $\mathrm{def}(\fn_Q) = 2$.

\numpar
We can now apply the formalism of~\S\ref{sec:super} to the twisted supertranslation algebra $\fn_Q$. The weighted flag structure takes $D_1$ to be the zero section and $D_2$ to be the full (holomorphic) tangent bundle.
Doing so, we recover the negatively graded algebraic de Rham complex of~$\C^2$:
\deq{
\Omega^\bu = \C[z_1,z_2][\d z_1,\d z_2],
}
with $\d z_i$ in totalized degree $-1$. The differential $\d_1$ is trivial, and $W^\bu = \Omega^\bu$; the ``structure sheaf,'' which is the canonical multiplet of~$\fn_Q$, just consists of holomorphic functions on~$\C^2$.

In order to give a representation as a multiplet living on $V = \R^7 \times \C^2$, we can resolve in smooth functions over $V$; this recovers the Dolbeault complex of $(0,\bu)$ forms on~$\C^2$. We note that this can be obtained directly by considering the canonical multiplet of the dg model $\fn_Q$:  
\begin{equation} \label{eq: maxtwist}
	A^{\bu}(\cO_{Y_{Q}}) \simeq \left( \Omega^{0,\bu}(\CC^2) \otimes \Omega^{\bu}(\R^7) \: , \: \bar{\partial}_{\CC^2} + \d_{\R^7} \right)
\end{equation}
In either case, this corresponds to the field content of the maximal twist of eleven-dimensional supergravity~\cite{MaxTwist}. 

We thus find ourselves in the setting of $\Z/2$-graded holomorphic Poisson--Chern--Simons theory. Constructing the $L_\infty$ structure recovers the interactions of Poisson--Chern--Simons described in~\S\ref{sec: pcs}. 

We note that the vanishing of the Chevalley--Eilenberg differential on the twisted supertranslation algebra (which directly follows from maximality of the twist) ensures that we end up with Poisson--Chern--Simons theory instead of its homotopy version. This is a general feature of maximal twists. Nonetheless, applying our construction to a non-integrable complex structure would have given rise to a non-strict Poisson--Chern--Simons theory with nonvanishing 3-ary bracket.

\subsection{The minimal twist}
The minimal twist was computed in the free limit at the pure spinor cochain level in~\cite{spinortwist}. Interactions for the component fields were proposed (and numerous consistency checks perfomed) in~\cite{RSW11d}.

\numpar[p:minstab][Twisting the supersymmetry algebra]
 The stabilizer of a minimal square-zero supercharge $Q \in Y$ is isomorphic to $\SU(5)$. Choosing such a $Q$ is equivalent to the choice of a maximal isotropic subspace $L \subset V$. The vector representation then decomposes as
\begin{equation}
	V = L \oplus L^\vee \oplus \C .
\end{equation}
The twisted super Poincar\'e algebra $(\fp, [Q,-])$ and its cohomology $\fp_Q$ were analyzed in~\cite{spinortwist}. The positively graded piece of the cohomology is found to be 
\begin{equation}
	\fn_Q \cong \Pi \wedge^2 L (-1) \oplus \wedge^4 L (-2) , 	
\end{equation}
where the bracket of two odd elements is given by the wedge product. (The parentheses refer to shifts in the weight grading.) The nilpotence variety $Y_Q$ is isomorphic to the affine cone over the the Grassmannian $\Gr(2,5)$ of two-planes inside a five-dimensional vector space. One can equivalently think of this as the space of bilinear skew forms of rank two on~$L^\vee$.
As an affine variety, we have $\dim(Y_Q) = 7$, and therefore
\begin{equation}
	\mathrm{def}(\fn_Q) = 7 - (10 - 5) = 2 .
\end{equation}

\numpar $\cO_{Y_{Q}}$ is also Gorenstein, so that we can apply our procedure to construct interactions for $A^\bu(\cO_{Y_{Q}})$.
By~\cite{spinortwist}, the pure spinor multiplet $A^\bu(\cO_{Y_Q})$ is equivalent to the minimal twist of the supergravity multiplet.
Our procedure thus constructs interactions for minimally twisted supergravity on the pure spinor cochain level, corresponding to a suggestion in~\cite{Ced-SL5}. We expect that the interacting theory with this field content constructed in~\cite{RSW11d} can be obtained from this cochain-level description via homotopy transfer, thus rigorously proving that the twisted eleven-dimensional supergravity theory of~\cite{RSW11d}---which is intimately related to the exceptional simple linearly compact super Lie algebra $E(5|10)$---is in fact the twist of eleven-dimensional supergravity. This computation will appear in forthcoming work~\cite{FHIS-transfer}.

\numpar[p:CE] From above, we know that $W^\bu$ can be constructed by considering the pure spinor multiplets associated to the Lie algebra cohomology groups of~$\fn_Q$. The cochains are given by
\begin{equation}
	C^\bu(\fn_Q) \cong \wedge^\bu L^\vee \otimes R ,
\end{equation}
where we identified 
\begin{equation}
	\sym^\bu(\fn_1^\vee) = R = \CC[\lambda^{ab}] .
\end{equation}
We think of $\lambda^{ab}$ as a basis on $(\fn_Q)_1^\vee = (\wedge^2 L)^\vee$ for $a,b=1,\dots, 5$, and make use of the isomorphism $\wedge^4 L \cong L^\vee$. Further, we can think of $L^\vee$ as constant holomorphic one-forms on $L = \CC^5$ with basis $\{\d z^a\}$. The Chevalley--Eilenberg differential is of the form
\begin{equation}
\d_{CE} = \lambda^{ab} \lambda^{cd} \varepsilon_{abcde} \frac{\partial}{\partial(\d z_e)} .
\end{equation}
As expected for a Gorenstein ring of defect two, the cohomology is concentrated in degrees $0,-1$ and $-2$:
\begin{equation}
	H^{k} (\fn_Q) \cong \begin{cases}
	R/I &k \in \{0,-2\} \\
	M  &k=-1\\
	0  &\text{else},
	\end{cases}
\end{equation}
where $M$ is the cokernel of the map
\begin{equation}
	\phi : R \otimes \wedge^2 L \longrightarrow R \otimes L \qquad e_a \wedge e_b \mapsto \varepsilon^{abcde} \lambda_{cd} e_e .
\end{equation}
Here $\{e_a\}$ is a basis of $L$. As $R/I$-modules, $H^0(\fn_Q)$ is freely generated by the unit 1, while $H^{-2}(\fn_Q)$ is freely generated by $\lambda_{ab} \d z^a \d z^b$.

\numpar
After tensoring with de Rham forms on~$\R$ in order to  resolve freely over $\CC^5 \times \R$, we can describe $\Omega^\bu$ with the quasi-isomorphic complex
\begin{equation}
	\left( \Omega^{\bu}_\text{dR}(\CC^5) \otimes \CC[\lambda^{ab}, \theta^{ab}] \: , \: \partial_{\CC^5} + \bar{\partial}_{\CC^5} + \sR + \d_{CE} \right) \otimes \left( \Omega^\bu(\R) \: , \: \d_{\R} \right) ,
\end{equation}
where 
\deq{\sR = \lambda \left( \pdv{ }{\theta} - \theta \pdv{ }{x} \right)}
is the standard pure spinor differential. Here, the spatial coordinate $x$ is one of $(z,\Bar{z})$.
Note that, with respect to the description in~\S\ref{sec:super}, $\theta$ is an odd function on the superspace $N$, whereas the one-forms are $\lambda$, $\d z$, and $\d \Bar{z}$. The weighted flag structure places $\d z$ in totalized degree $-1$ and everything else in degree zero.
We can identify
\begin{equation}
	\d_{1} = \d_{CE}, \quad \d_0 = \sR + \bar{\partial} + \d_\R, \quad \d_{-1} = \partial .
\end{equation}

We construct the generalized Dolbeault complex $W^\bu$ according to the standard procedure, using the formulas for the transferred $D_\infty$ structure above~\eqref{eq: trans-diff}. The weighted pieces of the generalized Dolbeault complex are the pure spinor multiplets associated to the modules of~\S\ref{p:CE}.
In contrast to the maximal twist, a piece of degree $-2$ arises, such that there is a non-vanishing map
\begin{equation}
	\d'_{-2} : A^\bu(H^0(\fn_Q)) \longrightarrow A^\bu(H^{-2}(\fn_Q)) ,
\end{equation}
signaling that the induced $L_\infty$ structure will not be strict.

The Gorenstein property guarantees that there is an isomorphism
\begin{equation}
	\pi : \left( W^{-2,\bu}, \d'_0 \right) \longrightarrow \left( W^{0,\bu}, \d'_0 \right) .
\end{equation}
Explicitly, $\pi$ is induced from the isomorphism between $H^{-2}(\fn_Q)$ and $H^0(\fn_Q)$; thus, in terms of representatives, we have
\begin{equation}
	\pi(\lambda_{ab} \d z^a \d z^b) = 1 .
\end{equation}
Hence, we obtain an $L_\infty$ algebra structure on $A^\bu(H^{0}(\fn_Q))$ by the formulas in Proposition~\ref{prop: L-inf}.

\subsection{Eleven-dimensional supergravity}
Recall that the canonical multiplet associated to the eleven-dimensional supertranslation algebra is the supergravity multiplet. In~\cite{Ced-towards} and~\cite{Ced-11d}, Cederwall constructed a consistent quartic BV action functional, recovering interacting eleven-dimensional supergravity in the pure spinor superfield formalism. We now recover these interactions as an instance of homotopy Poisson--Chern--Simons theory.

\numpar[p:untwist][Lie algebra cohomology and $W^\bu$] The Chevalley--Eilenberg cochains of the untwisted supertranslation algebra take the form
\begin{equation}
	C^\bu(\fn) = \left( \wedge^\bu V^\vee \otimes R \: ,\: \d_{CE} \right) ,
\end{equation}
where $R= \sym^\bu(S^\vee) = \CC[\lambda^\alpha]$ is the polynomial ring in $\{\lambda^\alpha\}$ with $\alpha= 1,\dots,32$. Fixing a basis $\{v_\mu\}$ of $V^\vee$, the Chevalley--Eilenberg differential takes the form
\begin{equation}
	\d_{CE} = \lambda^\alpha \Gamma^\mu_{\alpha \beta} \lambda^\beta \frac{\partial}{\partial v^\mu} .
\end{equation}
Again, Chevalley--Eilenberg cohomology is concentrated in degrees $0,-1$ and $-2$, with $H^0(\fn)$ and $H^{-2}(\fn)$ both being isomorphic to the ring of functions on the nilpotence variety $\cO_Y = R/I$. The cohomology in degree $-2$ is spanned by the class
\begin{equation}
	(\lambda^\alpha \Gamma^{\mu \nu}_{\alpha \beta} \lambda^\beta ) v_\mu v_\nu .
\end{equation}

\numpar[p:11dPoisson][Eleven-dimensional interactions] Applying the pure spinor superfield construction, we construct the generalized Dolbeault complex as the sum of the multiplets associated to the modules from the previous section. (We note that the multiplet $W^{-1,\bu}$ physically corresponds to a field-strength multiplet for~$W^{0,\bu}$; this fact was already appreciated in~\cite{CederwallM5}.)

As always, the weighted flag structure on the de Rham complex of superspace induces a $D_\infty$-module structure on~$W^\bu$, where $\d'_0$ is the standard pure spinor differential and $\d'_{-1}$ and $\d'_{-2}$ are both nontrivial.
Restricting these differentials to $W^{0,\bu}$ 
recovers Cederwall's differential operators constructed in~\cite{Ced-towards,Ced-11d}, where
\begin{equation}
	\d'_{-1} : W^{0,\bu} \longrightarrow W^{-1,\bu} ,
\end{equation}
corresponds to ``$R$'' and 
\begin{equation}
	\d'_{-2} : W^{0,\bu} \longrightarrow W^{-2,\bu}
\end{equation}
corresponds to ``$T$''. Together with $\pi$ induced from
\begin{equation}
	\pi(\lambda^\alpha \Gamma^{\mu \nu}_{\alpha \beta} \lambda^\beta v_\mu v_\nu ) = 1 ,
\end{equation}
this yields an $L_\infty$-structure on $W^{0,\bu}$.

\numpar[integration][Batalin--Vilkovisky actions?] We have constructed the $L_\infty$ structure underlying the interactions of eleven-dimensional supergravity as a homotopy Poisson--Chern--Simons theory, working at the pure spinor cochain level. It is also known that the complexes we work with admit $(-1)$-shifted (or at least odd-shifted) symplectic structures. Nevertheless, writing BV action functionals would require a theory of integration or Verdier duality on the dg ringed spaces we construct. This has not yet been worked out concretely in our examples, and so we refrain from writing such functionals here (although we note that it can be done using an analytic approach to constructing singular Calabi--Yau forms on the nilpotence variety; see~\cite{Ced-11d}, where the action functional in the untwisted case is written). We look forward to studying the theory of integration on pure spinor superspace in future work.

\printbibliography
\end{document}